\newcounter{prop}
\newtheorem{proposition}[prop]{Proposition}
\newcounter{cor}
\newtheorem{corollary}[cor]{Corollary}
\newcounter{lem}
\newtheorem{lemma}[lem]{Lemma}
\newcounter{rem}
\newtheorem{remark}[rem]{Remark}
\newtheorem{assum}{Assumption}
\newcommand{\EE}{\mathbb{E}}
\newcommand{\PP}{\mathbb{P}}
\newcommand{\1}{\mathbbm{1}}
\newcommand{\x}{{\boldsymbol x}}
\newcommand{\ub}{{\boldsymbol u}}
\newcommand{\X}{{\textbf{X}}}
\newcommand{\bbeta}{{\boldsymbol \beta}}
\newcommand{\z}{\boldsymbol z}
\newcommand{\Z}{\boldsymbol Z}
\author{Daphn\'e Aurouet\footnote{Ensai, CREST - UMR 9194, France; daphne.aurouet@ensai.fr} \qquad Valentin Patilea\footnote{Corresponding authors: Ensai, CREST - UMR 9194, France; valentin.patilea@ensai.fr}}
\date{\today}
\title{Continuously updated  estimation of conditional hazard functions}
\begin{document}
	\maketitle
	
\begin{abstract}
Motivated by the need to analyze continuously updated data sets in the context of time-to-event modeling, we propose a novel nonparametric approach to estimate the conditional hazard function given a set of continuous and discrete predictors. The method is based on a representation of the conditional hazard as a ratio between a joint density and a conditional expectation determined by the distribution of the observed variables. It is shown that such ratio representations are available for uni- and bivariate time-to-events, in the presence of common types of random censoring, truncation, and with possibly cured individuals, as well as for competing risks. This opens the door to nonparametric approaches in many time-to-event predictive models. To estimate joint densities and conditional expectations we propose the recursive kernel smoothing, which is well suited for online estimation. Asymptotic results for such estimators are derived and it is  shown that they achieve optimal convergence rates. Simulation experiments show the good finite sample performance of our recursive estimator with right censoring. The method is applied to a real dataset of primary breast cancer.
\end{abstract}

\medskip

\textbf{Keywords:} 
Cure models; 
Current status; 
Kernel smoothing; 
Left or right censoring; 
Stochastic approximation method; 
Truncation.

\textbf{MSC 2010:} 62N02, 62G05, 68W27


\newpage \section{Introduction}

Time-to-event analysis, also called event history, is a long-standing topic in statistics that deals with the time to a particular event or outcome. Application fields include, but are not limited to, biostatistics, engineering, labour economics, finance and insurance, and demography. In the most common setting, the variable of interest is a non-negative random variable $T$ representing the time until the event of interest occurs. The aim is to estimate the distribution of this variable, or its conditional distribution if predictors are also observed.

The main challenge in time-to-event analysis comes from the fact that the variable $T$ is not always observed. For a fraction, or all, of the observed individuals, the data only indicate that $T$ is less than or greater than some observed value, or that $T$ is between two observed values. The existence of some individuals may even be unknown, or the value of their time-to-event $T$ is infinite. We shall  call all reasons that prevent the exact observation of $T$ \emph{incompleteness mechanisms}. Thus, they include censoring (left, right, interval), truncation (left, right), and long-time survivors (cured individuals). See, the textbooks and reviews \cite{KM2003}, \cite{BR2004}, \cite{KK2012}, \cite{KvHIS2014}, \cite{AVK2018} for extensive lists of different types of common incompleteness mechanisms in time-to-event analysis. 

Estimating the distribution of $T$ is often done through the hazard function, especially when predictors are available, in which case the interest is in the conditional hazard function given the predictors (also called covariates). 
The classical approaches to modeling the hazard function are parametric and semiparametric as in the case of the famous Cox's proportional hazard model. However, model-free (nonparametric) approaches are preferable when the underlying assumptions of the (semi)parametric models do not meet consensus. Model-free approaches can be effective as soon as data sets of hundreds of individuals are available.

A key message of this paper is that the conditional hazard function allows for a convenient representation, as a ratio between the joint density and a conditional expectation of the observed variables. This representation is well known in the case of right-censoring, see for example  \cite{E2018}, \cite{E2024}. We recall that similar representations exist in other situations, and show that it can be derived in many others, yet unexplored contexts where an incompleteness mechanism prevents the observation of $T$ for all individuals. Our examples include models with left or right random censoring,  with or without random truncation, with or without a cure fraction, and also some types of modified current status models. The cases with competing risks or bivariate time-to-event in the presence of random right censoring are also discussed. By integrating the hazard function and applying the product integral, the hazard representation allow the conditional survival of $T$ to be expressed as a closed-form function of the distribution of the data.

If such ratio-type representations of conditional hazard can be established, it opens the door to the use of nonparametric estimators of the numerator and denominator. In particular those adapted to massive data sets, possibly  arriving in streams. We propose to use recursive kernel estimators. Based on the stochastic approximation idea, they are easy to implement and to update, and achieve the optimal rate of convergence. 

Recursive kernel density estimation for fully observed data was  introduced by \cite{WW1969} and \cite{Y1971}, alternative weighting schemes can be found in \cite{WD1979} and \cite{D2013}. For the Wolverton-Wagner-Yamato estimator, theoretical convergence guarantees can be found in  \cite{D1972}
and \cite{R1992}. For the choice of the bandwidth, one can refer to \cite{KX2019} when the regularity is given, or to \cite{CM2020} for adaptive strategies. For extensions of the recursive approach to nonparametric kernel regression, see, among many others, 
\cite{R1992}, \cite{D2013}, \cite{KX2019}. Extensions of the recursive kernel estimators for density or regression in the context of right censored data have been  considered using the idea of inverse probability censoring weighting (IPCW)  with the Kaplan-Meier  estimator of the censoring survival function; see for example \cite{KS2014} and \cite{SK2020}. The IPCW approach in the recursive context is limited by the fact that the  Kaplan-Meier  estimator is computed using the whole sample, and in the case of a regression problem it requires the censoring independent of the predictors. Recursive hazard function estimation has been considered by \cite{R1992} and \cite{DMP2018}. The recursive estimation of the conditional hazard have received limited attention, even in the  right censoring case.  

The paper is organized as follows. In Section \ref{sec:chap1-model-methodo}, assuming  $T$ and the censoring  independent given the predictors, we derive the conditional hazard function representation in the case of right censoring; see also \cite{E2024}. The recursive kernel estimation of the numerator and the denominator in the hazard representation under right censoring is introduced in Section \ref{sec:chap1-model-estimation}. It allows for both continuous and discrete predictors. In Section \ref{sec:chap1-model-discussion}, we show that similar conditional hazard representations, a ratio between a joint density and a conditional expectation, can be derived with many other  incompleteness mechanisms. The same type or recursive nonparametric estimators can then be applied. We thus  provide a novel unifying perspective on model-free hazard estimation in predictive time-to-event models. The section ends with a discussion of the representations of the conditional cumulative hazard, leading to the conditional survival functions via product integration. In Section \ref{sec:chap1-model-theory}, we study the rate of uniform convergence in probability of a generic kernel estimator, which can be particularized to obtain recursive estimators for the numerator and the denominator in the different representations that were discussed. Our generic estimator is shown to achieve the optimal rate of uniform convergence on compact intervals, which implies the optimal rate of uniform convergence for the conditional hazard function. In Section \ref{sec:chap1-emp-analysis}, we present  a simulation study in the case of right censoring and the analysis of a real data set presented in \cite{RA2013} on patients suffering from primary breast cancer. Our model-free estimator of the conditional hazard has good performance and is easy to compute with large samples. Few technical details are gathered in an Appendix, additional details are given in a Supplement.

\section{Methodology}\label{sec:chap1-model-methodo}

In our framework, the data include a continuous random variable  $T\geq 0 $,  representing the time to the event of interest, and a vector $\X = (\X_c, \X_d)$ of covariates (also called predictors). The sub-vector $\X_c \in \mathcal X_c \subset  \mathbb R^{d_c}$ contains $d_c\geq 1$ continuous predictors, and $\X_d\in \mathcal X_d$ gathers the discrete ones. Let $\mathcal{X}$  denote the support of $\X$ and $\x=(\x_c,\x_d)$ denote a point in $\mathcal{X}$.

To account for right-censoring, we consider another random positive variable $C$,  the censoring time. Instead of $T$, the observed variables are the duration $Y=T \wedge C$ together with  the indicator $\delta = \1\{T \leq C\}$. The data consist of independent copies of $(Y, \delta, \X)$.


\begin{assum}\label{id_ass}
 $T$ and $C$ are conditionally independent given $\X$ (notation~: $T\perp C\mid \X$). 
\end{assum} 


Assumption \ref{id_ass} is the common identification condition.  Next, let 
$$
F_{T\mid \X}(t\mid\x) = \PP (T\leq t \mid \X=\x)\quad \text{ and } \quad F_{C\mid \X}(t\mid\x) = \PP (C\leq t \mid \X=\x),\quad t\geq 0,\x\in\mathcal{X},
$$
be the conditional distribution functions of $T$ and $C$ given $\X=\x$, respectively. Let
$S_{T\mid \X} = 1- F_{T\mid \X}$   and   $S_{C\mid \X} = 1- F_{C\mid \X}$ denote the corresponding survival functions. 
For each $\x\in\mathcal{X}$, we also consider the conditional sub-distributions functions of $Y$ given $\X=\x$, that are
\begin{equation}\label{eq:def_Hk}
	H_k(t\mid \x) = \mathbb{P}(Y \leq t, \delta = k\mid \X=\x), \qquad k\in\{0,1\}, 
\end{equation}
and the conditional distribution function 
	$H(t| \x) = \mathbb{P}(Y \leq t| \X=\x) = H_0(t| \x) + H_1(t| \x)$. 
By Assumption \ref{id_ass}, we have 
\begin{equation}\label{prod_STSC}
	1-H(t\mid \x) = S_{T\mid \X}(t\mid\x)S_{C\mid \X}(t\mid\x) , 
\end{equation}
and the following relationships hold true~: 
\begin{equation}\label{eq:sys1}
	\left\{\begin{array}{ll}
		H_1(dt\mid \x) = S_{C\mid \X}(t-\mid \x) F_{T\mid \X}(dt\mid \x) \\
		H_0(dt\mid \x) = S_{T\mid \X}(t\mid \x) F_{C\mid \X}(dt\mid \x)
	\end{array}\right. ,\qquad t\geq 0,\x\in\mathcal X.
\end{equation}
Here, for any $\x \in \mathcal X$,  $H_k(dt\mid \x)$, $F_{T\mid \X}(dt\mid \x)$ and $F_{C\mid \X}(dt\mid \x)$ are the measures associated to the càdlàg functions $H_k(\cdot\mid \x)$, $F_{T\mid \X}(\cdot\mid \x)$ and $F_{C\mid \X}(\cdot\mid \x)$, respectively. 
The system \eqref{eq:sys1} defines a one-to-one mapping between the `latent' conditional distributions $F_{T\mid \X}$, $F_{C\mid \X}$ and 
the observed conditional distribution defined by $H_0$ and $H_1$. Solving this system allows to express the cumulative hazard function of $T$, and next $F_{T\mid \X}$, as an explicit function of $H_0$ and $H_1$. Plugging standard kernel estimators for $H_0$ and $H_1$ in this map leads to the \emph{conditional Kaplan-Meier} estimator. See \cite{B1981}, \cite{D1987}, \cite{D1989}. 

Here we follow the idea of expressing a quantity of interest as a function of quantities that can be consistently estimated with censored data, with a focus on the conditional hazard function of $T$ given the predictors. For any $\x \in \mathcal X$, let $f_{T\mid \X}(t\mid \x)$ denote the conditional density of $T$ given $\X = \x$, such that $F_{T\mid \X}(dt\mid \x)= f_{T\mid \X}(t\mid \x)dt$. Let $f_{Y,\delta}(t,1\mid \x)$ denote the sub-density corresponding to $H_1(dt\mid \x)$, that means 
$
H_1(dt\mid \x)  = f_{Y,\delta\mid \X}(t,1\mid \x) dt,
$
where $f_{Y,\delta\mid \X}(\cdot,\cdot\mid \x)$ denotes the conditional joint density of $(Y,\delta)$, given $\X=\x$, with respect to the product measure between the Lebesgue measure and the counting measure. Using the first equation in \eqref{eq:sys1} and  \eqref{prod_STSC}, we write the conditional hazard  function given $\X=\x$ as
\begin{equation}\label{eq:haz_1}
	\lambda_{T\mid \X}(t\mid \x) = \frac{f_{T\mid \X}(t\mid \x)}{S_{T\mid \X}(t\mid \x)} =  \frac{f_{Y,\delta\mid \X}(t,1\mid \x)}{1 - H(t-\mid \x)}. 
\end{equation}
As a consequence, for the conditional survival function of $T$ we get
\begin{equation}\label{eq:S_1}
	S_{T\mid \X}(t\mid \x)= \exp\left(- \int_0^t 	\lambda_{T\mid \X}(s\mid \x)  ds\right) = \exp\left(- \int_0^t \frac{f_{Y,\delta\mid \X}(s,1\mid \x)}{1 - H(s\mid \x)} ds\right).
\end{equation}

In order to define our recursive estimator, we rewrite the conditional hazard function $\lambda_{T\mid \X}$ under a more convenient form.
Let $f_\X(\x)$ denotes the  density of $\X$ with respect to the product measure between the Lebesgue measure and the counting measure. Then,  
\begin{equation}\label{eq:h1_equiv22_a}
f_{Y,\delta\mid \X}(t,1\mid \x)f_\X(\x) = f_{Y,\delta,\X}(t,1,\x)= f_{Y,\X\mid \delta}(t, \x \mid \delta = 1)  \mathbb{P}(  \delta = 1),
\end{equation}
with $f_{Y,\delta,\X}$ the joint density of the data.
As a consequence, the conditional hazard is written as 
\begin{equation}\label{eq:haz_2}
	\lambda_{T\mid \X}(t\mid \x)  
	= \frac{f_{Y,\delta,\X}(t,1,\x )}   {R(t,\x)} \qquad \text{with } \quad R(t,\x)=\mathbb E(\1\{Y\geq t \} \mid \X=\x) f_\X(\x) .
\end{equation}
See also \citep[][Eq. (6.2.5)]{E2018} for the unconditional version of this representation of the hazard function. The function $R(t,\x)$ is called the \emph{natural nuisance function} in \cite{E2024} and plays a key role in the minimax theory of nonparametric estimation.

\section{Estimation}\label{sec:chap1-model-estimation}

The estimation approach is based on the identity \eqref{eq:haz_2} and recursive, nonparametric procedures for the numerator and the denominator.   For this purpose, we introduce some notation~:   
$$ 
\boldsymbol K_h (\ub)= h^{-d_c} \boldsymbol K (\ub/h), \quad \text{ where  } \quad \boldsymbol K: \mathbb R^{d_c} \rightarrow [0,\infty),
$$ 
is a multivariate kernel function, for example a product of univariate densities, and $h>0$ is a bandwidth. Similarly,  $L_b (y) = b^{-1}L(y/b)$, where $L: \mathbb R  \rightarrow [0,\infty)$ is a univariate kernel and $b$ a bandwidth. The sample  $(Y_i, \delta_i, \X_i)$, $i=1, \dots,n$, are independent copies of $(Y, \delta, \X)$, and the covariate vectors $\X_i$ are split into continuous and discrete components as $\X_i=(\X_{i,c},\X_{i,d})$.

Using the recursive kernel density estimator idea \citep{WW1969}  we first build the estimator of the joint density
$f_{Y,\delta,\X}(t,1,\x )$~: for $t\geq 0$, $\x=(\x_c,\x_d)$, 
\begin{multline}\label{eq:fRNWuni}
	\widehat{f}_{i} (t, 1, \x) = \frac{\delta_i}{i} L_{b(i)}\left(Y_i - t\right) \boldsymbol K_{h(i)}\left(\X_{i,c} - \x_c\right) \1 \{  \X_{i,d} = \x_d\}  \\
	+ \frac{i-1}{i} \widehat{f} _{i-1}(t,1,\x),\quad i\geq 1,\qquad \text{ with} \quad \widehat{f} _{0}(\cdot,1,\cdot)=0.
\end{multline}
Here, $b(i)$ and $h(i)$, $i\geq 1$, are sequences of bandwidth decreasing to zero at suitable rates. Regarding the recursive estimator of $R(t,\x)$, 
a natural choice 
is~:  $\forall i \geq 1$, 
\begin{equation}\label{eq_rec_Hf}
	\widehat R_i(t,\x)  = \frac{\1 \{Y_{i}\geq t \}  }{i }  \boldsymbol  K_{g(i)} \left(  \X_{i,c} - \x_c\right)\1 \{  \X_{i,d} = \x_d\}  +
	\frac{i - 1}{i} \widehat R_{i-1}(t,\x)  , \quad \text{with} \quad \widehat R_0(\cdot,\cdot)= 0.
\end{equation}
Here, $ g(i)>0$  is a  bandwidth which decreases with $i$ at a suitable rate. A same multivariate product kernel $\boldsymbol  K(\cdot) $ can be used for $\widehat{f}_{i}
(t, 1, \x  )$ and $\widehat R_i(t,\x) $. Moreover, for simplicity, we will set $h(i)=g(i)$ and use the same bandwidth for smoothing all the components of $\X_{i,c}$. In practice, the continuous covariates can be simply standardized beforehand. 

Based on \eqref{eq:haz_2}, we define the \emph{recursive conditional hazard (RCH)} estimator as
\begin{equation}\label{eq_rec_haza}
	\widehat \lambda _{T\mid \X, n} (t \mid \x) = \frac{\widehat{f}_{n}(t,1,\x  )}{ \widehat R_n(t,\x)  } , \qquad \x = (\x_c,\x_d) , \;  t>0.
\end{equation}
(The rule $0/0=0$ applies.) The conditional survival function estimator is then
\begin{equation}\label{eq_rec_surv}
	\widehat S _{T \mid \X,n} (t|\x) = \exp\left( - \int_0^t \widehat \lambda _{T \mid \X,n} (s\mid \x) ds \right).
\end{equation}



\begin{remark}\label{rem2v}
It is worth noting that $\widehat{f}_{n}(t,1,\x  )$ and $\widehat R_n(t,\x) $ can be rewritten under the form
\begin{equation}\label{sum_write}
\widehat{f}_{n}(t,1,\x  ) =  \frac{1}{n}\sum_{i=1}^n \delta_i U_{f,i}(t,\x  ) \qquad  and \qquad 
\widehat{R}_{n}(t,\x  ) =  \frac{1}{n}\sum_{i=1}^n U_{R,i} 	(t,\x  ),
\end{equation}
with
\begin{equation}\label{U_f}
	U_{f,i}(t,\x  ) =   L_{b(i)}\left(Y_i - t\right) \boldsymbol K_{h(i)}\left(\X_{i,c} - \x_c\right) \1 \{  \X_{i,d} = \x_d\} ,
\end{equation}
and
\begin{equation}\label{U_R}
	 U_{R,i} 	(t,\x  ) = \1 \{Y_{i}\geq t \}   \boldsymbol  K_{g(i)} \left(  \X_{i,c} - \x_c\right)\1 \{  \X_{i,d} = \x_d\} .
\end{equation}
These alternative expressions are more convenient for the theoretical study. 
\end{remark}

%
%
%


\begin{remark}\label{boundary_t}
Given that the random variable $Y$ is strictly positive, it may give rise to problems in the estimation of the left endpoint of the support. Standard remedies for the boundary problem can be applied, see \cite{S1986}. For example, if the conditional density of $Y$ given $\X$ has a zero limit when $t\downarrow 0$, we can use a correction via reflection by redefining 
$$
	U_{f,i}(t,\x  ) =   \{L_{b(i)}\left(Y_i - t\right) - L_{b(i)}\left(Y_i + t\right)\}\boldsymbol K_{h(i)}\left(\X_{i,c} - \x_c\right) \1 \{  \X_{i,d} = \x_d\} , 
$$
and modifying $	\widehat{f}_{i} (t, 1, \x)$ accordingly. See \cite{S1986} for other simple corrections designed for other types of boundary behavior assumptions for the density.\end{remark}

\section{Extending the scope}\label{sec:chap1-model-discussion}

It is worth noticing that the idea we propose in this paper, that is to use the hazard representation \eqref{eq:haz_2} with recursive estimators for the denominator and the numerator, is general. First, there are many other recursive estimators that can be used. Second, our approach applies to other types of incomplete observation mechanisms for the duration $T$ than the random right censoring.   Details of these two types of extension are given below. 

\subsection{Alternative recursive schemes}\label{subsec_alt_rec}

With the notation from Remark \ref{rem2v}, a  general class of estimators for $f_{Y,\delta,\X}(t,1,\x )$ can be 
\begin{equation}\label{eq:alt_rec}
\widehat{f}_{n}(t,1,\x  ) =  \frac{1}{\Omega_n}\sum_{i=1}^n w_i \delta_i U_{f,i}(t,\x  ) ,\quad \text{ where } \; w_i>0 \; \text{ and }\; \Omega_n = \sum_{i=1}^n w_i
\rightarrow \infty.
\end{equation}
As recalled by \cite{MPS2009}, different choices of the weights $w_i$ lead to different types of recursive density estimators~: the choice $w_i=1$ leads to our estimator $\widehat f_i$ which is basically the one of \cite{WW1969} and \cite{Y1971}; if $b(i) =h(i) \sim i^{-\alpha}$ and $\boldsymbol K$ is the product kernel constructed with the univariate kernel $L$, setting $w_i=h(i)^{(d_c+1)/2}$  leads to a version of the estimator studied by \cite{WD1979}, while the choice  $w_i=h(i)^{d_c+1}$ yields the estimator studied by \citep[Ch. 7]{D2013}. 

For alternative local constant estimators for  $R(t,\x)$ see, for example, \cite{R1992}. 
A recursive version of the local linear estimator is studied by \cite{KX2019}.

\subsection{Conditional hazard representations beyond the right censoring}\label{sec_ext}

In the following, we establish the representation of the hazard function for several types of incompleteness mechanisms. While some of the following facts are more or less explicitly stated in the literature, several of the representations derived below appear to be new. 

\subsubsection{Left censoring}

A possible extension of our approach is to consider left-censoring instead of right-censoring. See, for example, \cite{GUO1991}. In the left-censoring context, instead of duration of interest $T$, the observed variable is $Y=T \vee C$ together with the indicator $\delta = \1\{T \geq C\}$. In this case, $C$ is the left-censoring. 
By Assumption \ref{id_ass}, the following relationships hold true~: 
\begin{equation}\label{eq:sys2}
	\left\{\begin{array}{ll}
		H_1(dt\mid \x) = F_{C\mid \X}(t\mid \x) F_{T\mid \X}(dt\mid \x) \\
		H_0(dt\mid \x) = F_{T\mid \X}(t-\mid \x) F_{C\mid \X}(dt\mid \x)
	\end{array}\right. ,
\end{equation}
with the $H_k$ defined in \eqref{eq:def_Hk}. 
The system \eqref{eq:sys2} can be solved to deduce
\begin{equation}\label{reverse_l}
	r_{T\mid \X}(t\mid \x) := \frac{f_{T\mid \X}(t\mid \x)}{F_{T\mid \X}(t\mid \x)} =  \frac{f_{Y,\delta, \X}(t,1,\x)}{H(t\mid \x)f_\X(\x)}. 
\end{equation}
The function $r_{T\mid \X}$ is a conditional reverse time hazard function which, like the standard hazard function, characterizes the conditional distribution of $T$ given $\X=\x$ because the relationship  $F_{T\mid \X}(t\mid \x)=\exp(-\int_t^\infty r_{T\mid \X}(s\mid \x) ds)$ holds true. 
A recursive estimator of $r_{T\mid \X}(t\mid \x)$ is  obtained from $\widehat{f}_{n}(t,1,\x  ) $ in \eqref{sum_write} and the estimator of $H(t\mid \x)f_{\X}(\x) $ defined as 
\begin{equation}\label{U_f2}
	\frac{1}{n}\sum_{i=1}^n V_{L,i} 	(t,\x  ), \qquad \text{ with} \qquad 	V_{L,i} 	(t,\x  ) = \1 \{Y_{i}\leq t \}   \boldsymbol  K_{g(i)} \left(  \X_{i,c} - \x_c\right)\1 \{  \X_{i,d} = \x_d\} .
\end{equation}

\subsubsection{Right censoring with a cure proportion}

In many applications a fraction of the right censored observed durations corresponds to subjects who will never experience the event of interest. In biostatistics such models are usually called \emph{cure models}. Economists sometimes call such models \emph{split population models}, while the reliability engineers refer to them as \emph{limited-failure population} life models. The usual approach for such models is to suppose that the support of the duration $T$ is $[0,\infty]$, while the right censoring $C$ has no mass at infinity. See \cite{PVK2020} for a discussion. Let $\phi(\x)=\PP(T<\infty \mid \x)>0$ be  the conditional probability of being susceptible to the event of interest given the covariates, and $1-\phi(\x)$ be the conditional probability of being cured. The second equation in \eqref{eq:sys1} then becomes 
\begin{equation}\label{eq:sys1c}
		H_0(dt\mid \x) = S_{T\mid \X}(t\mid \x) F_{C\mid \X}(dt\mid \x) 
=\{1-\phi(\x)+\phi(\x) S_{0,T\mid \X}(t\mid \x)\} F_{C\mid \X}(dt\mid \x),
\end{equation}
where $S_{0,T\mid \X}(t\mid \x)=\PP(T>t\mid T<\infty , \X=\x)$ is the (proper)  survival function of the susceptibles (often called the latency of the model).
For identification purposes it is commonly assumed that, for any $\x$,  $S_{0,T\mid \X}(\cdot\mid \x)$ has a bounded support, say, $[0,\tau(\x)]$ and $\inf_{t\in [0,\tau(\x)]}S_{C\mid \X}(t\mid \x) \geq \mathfrak c>0$, for some constant $\mathfrak c$ (not depending on the covariates). By  \cite[Equation (2.10)]{PVK2020}, we get the following expression for the conditional hazard function of the susceptibles~:
\begin{equation}\label{eq:haz_17}
	\lambda_{0,T\mid \X}(t\mid \x) 
	= 
\frac{f_{Y,\delta,\X}(t,1,\x)}{R(t,\x)- \{1-\phi(\x)\}S_{C\mid \X}(t-\mid \x)f_\X(\x)}	, \qquad t\in[0,\tau(\x)],
\end{equation}
with $R(t,\x)$ defined in \eqref{eq:haz_2}. It is worth noticing that $\{1-\phi(\x)\}$, $S_{C\mid \X}(t-\mid \x)$ and $f_\X(\x)$ can also be estimated recursively. Indeed, by definition $1-\phi(\x)= S_{T\mid \X}(\tau(\x)\mid \x)$ as defined in \eqref{eq:S_1}, and therefore the conditional probability of being cured can simply be estimated by the recursive survival estimator in \eqref{eq_rec_surv} considered with $t=\tau(\x)$. Next, by switching the roles of the variables $T$ and $C$ and redefining  the censoring indicator as $1-\delta$, $C$ is right-censored by $T$. Then, since $C$ has no mass at infinity, assuming it admits a conditional hazard, an estimator like in \eqref{eq_rec_surv} can be constructed to estimate  $S_{C\mid \X}(t\mid \x)$. Finally, $f_\X(\x)$ can be estimated by a standard recursive density estimator, that is a simplified version of $\widehat f_n$ in \eqref{sum_write}.

\subsubsection{Left truncation and right censoring}

In addition to random right censoring, we can also consider a left truncation mechanism, which also occurs naturally in the applications. 
For example in a medical study  conducted over the given period about the longevity of patients after a surgery, the data are left truncated because only patients who were alive at the beginning of the study may be included in the study. The survival function estimator with left truncation, analogous to the  Kaplan-Meier estimator, is the so-called Lynden-Bell estimator; see \cite{LB1971}, \cite{W1985}. To introduce the LTRC mechanism, that is \emph{left-truncation in the presence of right-censoring}, we follow the lines in \citep[Section 6.4]{E2018}. There is hidden (latent) vector $(T^*,C^*,\X^*, L^*)$, where $L^*$ is the truncation variable, $T^*$ is the variable of interest, $C^*$ is the right censoring, and $\X^*$ are the predictors. Let  $(T_l^*,C_l^*,\X_l^*, L^*_l)$, $l\geq1$, be independent copies  of $(T^*,C^*,\X^*,L^*)$. If $L^*_l > Y^*_l:=T_l^*\wedge C_l^*$, the $l-$th realization in the latent sample is not observed, while if $L^*_l \leq  Y_l^*$ the realization $(Y_l^*,\delta_l^*,\X_l^*, L^*_l)$, with $\delta_l^*=\1\{T_l^*\leq C_l^*\}$,  becomes the $i-$th observation  $(Y_i,\delta_i, \X_i, L_i)$ of the observed variables $(Y,\delta, \X, L)$. Assuming the mutual conditional independence 
$
T^* \perp L^* \perp  C^* \mid \X^*,
$
and that  $\PP(L^* \leq  T^*\wedge C^*\mid \X^*=\x) \geq c>0$, for some constant $c$, 
it can be shown that 
\begin{equation}\label{eq:haz_11}
	\lambda_{T^*\mid \X^*}(t\mid \x) = \frac{f_{T^*\mid \X^*}(t\mid \x)}{S_{T^*\mid \X^*}(t\mid \x)} =  \frac{f_{Y,\delta ,\X}(t,1,  \x)}{\mathbb P (L\leq t \leq Y \mid \X=\x)f_{\X}(\x)},\qquad \forall t\geq 0,\; \x\in\mathcal X.
\end{equation}
See \citep[Eq. (6.4.9)]{E2018}. Here, $\lambda_{T^*\mid \X^*}$, $S_{T^*\mid \X^*}$ and $f_{T^*\mid \X^*}$ are the conditional hazard, the conditional survival and the conditional density of $T^*$ given $\X^*$, respectively. The numerator in \eqref{eq:haz_11} can be recursively estimated by $\widehat f_n$ in \eqref{sum_write}, while the recursive estimator of the denominator can be defined as $\widehat R_n$ but with $\1 \{Y_i\geq t\}$ replaced by $\1\{L_i\leq t \leq Y_i\}$ in  \eqref{U_R}. If  $\PP(L^* \leq  T^*\wedge C^*\mid \X=\x) =1$, the representations \eqref{eq:haz_1} and \eqref{eq:haz_11} coincide.

It is worth noting that both the truncation mechanism and the right censoring \emph{with} a cure rate can be present. A representation of the conditional hazard function of the susceptibles (the individuals with $ T^*<\infty$)
can be derived by combining \eqref{eq:haz_17} and \eqref{eq:haz_11}, that is 
\begin{equation}\label{eq:haz_108}
	\lambda_{0,T^*\mid \X^*}(t\mid \x) =  \frac{f_{Y,\delta,\X}(t,1,\x)}{\mathbb P (L\leq t \leq Y \mid \X=\x)f_\X(\x)- \{1-\phi(\x)\}\mathbb P (L\leq t \leq C \mid \X=\x)f_\X(\x)}	, 
\end{equation}
$t\in[0,\tau(\x)]$. Details are provided in the Appendix. For  the cure rate function $1-\phi(\x)$ we can use the identity $1-\phi(\x)= \PP(T^*=\infty\mid \x)= S_{T^*\mid \X^*}(\tau(\x)\mid \x)$ and express $S_{T^*\mid \X^*}(\cdot\mid \x)$ using  $\lambda_{T^*\mid \X^*}$ in \eqref{eq:haz_11}. The conditional probabilities  $\mathbb P (L\leq t \leq Y \mid \X=\x)$ and $\mathbb P (L\leq t \leq C \mid \X=\x)$ have representations as $R(t, \x)$  in \eqref{eq:haz_2}, with   $\1 \{Y\geq t \}$  replaced by  $\1 \{L \leq t \leq Y\}$ and $\1 \{L \leq t \leq C\}$, respectively. The identity \eqref{eq:haz_108} requires $C$ to be available for each observation. This is often the case  when the censoring value is determined by the follow-up period; see \cite{F2015} and \cite{CFK2015} for some recent examples in economics.

\subsubsection{A modified current status model}

As another example where our approach applies, let us consider the case studied by \cite{PR2006}, where instead of the lifetime of interest $T,$ one observes independent copies of a finite nonnegative duration $Y$ and of a discrete variable $A\in\{0,1,2\}$   such that
\begin{equation}\label{datai}
Y<T \; \text{ if } \;  A=0, \quad Y=T \; \text{ if } \;  A=1 \quad \text{ and} \quad  Y\geq T \; \text{ if } \;  A=2.
\end{equation}
A classic example of this type of censoring, which can be  called \emph{modified current status}, comes from the Stanford-Palo Alto Peer Counseling
Program; see \cite{HKJ1975}. In this study, 193 high schools boys were asked ‘When did you first use marijuana?’. The answers were the exact ages ($A=1$), ‘I have used it but cannot recall just when the first time was’ ($A=2$) and ‘I never used it’ ($A=0$). The variable of interest $T$ is the age at first use of marijuana. See also \cite{CRFM1991} for another example of application. It is worth noting that in \eqref{datai} the limit case where the event $\{A=2\}$ (resp. $\{A=0\}$) has zero probability corresponds to the usual random right-censoring (resp. left-censoring) setup, while the case where  $\PP(A=1)=0$ corresponds to the current status framework.

Extending the setup of \cite{PR2006} to regression, in addition to a non-negative censoring variable $C$, let $\Delta$ be a latent Bernoulli variable with conditional success probability $p(\x) \in (0,1]$. The observed variables $(Y,A,\X)$ are obtained as follows~: 
\begin{equation}\label{PR_Xx}
\left\{
\begin{array}{lll}
	(Y,A,\X)=(C,0,\X) & \mbox{if} &   C < T \\
	(Y,A,\X)=(T,1,\X) & \mbox{if} &    T\leq C\mbox{ and }\Delta=1\\
	(Y,A,\X)=(C,2,\X) & \mbox{if} &  T\leq C\mbox{ and }\Delta=0 \\
\end{array}
\right. .
\end{equation}
The conditional distribution  of  $(Y,A)$ given $\X=\x$ is given by the sub-distribution functions
\begin{equation}\label{Hk_rolin}
H_k(t \mid \x)=\PP(Y\leq t, A=k \mid  \X=\x),\quad t\geq 0, \;k\in\{0,1,2\},\; \x\in \mathcal X.
\end{equation}
Let $H(\cdot \mid \x)= H_0(\cdot \mid \x)+H_1(\cdot \mid \x)+H_2(\cdot \mid \x)$, and let $H(dt \mid \x)$, $H_k(dt \mid \x)$, denote the associated measures for $H(\cdot \mid \x)$, $H_k(\cdot \mid \x)$, $k\in\{0,1,2\}$, respectively. 

In addition to Assumption \ref{id_ass}, assume that 
\begin{equation}\label{id_ass_b}
	\Delta \perp (T,C) \mid \X.
\end{equation}
Then, with the notation from Section \ref{sec:chap1-model-methodo}, for any $\x$ in the support of $\X$, we have
\begin{equation}\label{model_eq1PR}
	\left\{
	\begin{array}{ccl}
		H_0(dt\mid\x)  & = & S_{T\mid \X}(t\mid\x) F_{C\mid \X}(dt\mid\x) \\
	 	H_1(dt\mid\x)  & = & p (\x) S_{C\mid \X}(t-\mid\x)F_{T\mid \X}(dt\mid\x)\\
		H_2(dt\mid\x)  & = & \{1-p(\x) \} F_{T\mid \X}(t\mid\x)F_{C\mid \X}(dt\mid\x)
	\end{array}
	\right. .
\end{equation}
Notice that by the condition \eqref{id_ass_b} and the definition of the model \eqref{PR_Xx},
\begin{equation}\label{inv_pPR}
	p(\x)=\frac{\mathbb{P}(\Delta = 1,  T\leq C\mid \X=\x)}{\mathbb{P}(T\leq C\mid \X=\x)} =\frac{\PP (A=1\mid \X=\x)f_{\X}(\x)}{\PP (A\in\{1,2\}\mid \X=\x)f_{\X}(\x)}.
\end{equation}
Next, after multiplying both sides of the first equation in \eqref{model_eq1PR} by $p(\x)$, then adding it to the second equation, and finally integrate over $[t,\infty)$, we get
$$
1-H_1(t-\mid \x)+p(\x)\{1-H_0(t-\mid \x) \}= p(\x) S_{T\mid \X}(t\mid \x)S_{C\mid \X}( t- \mid \x),\qquad t\geq 0.
$$ 
Finally, using this for dividing both sides in the second equation in \eqref{model_eq1PR}, we get the following  representation for the conditional hazard function of the duration of interest $T$~:
\begin{equation}\label{eq:haz_3}
	\lambda_{T\mid \X}(t\mid \x)  
	= \frac{f_{Y,A,\X}(t,1,\x )}   {[1-H_1(t-\mid \x)+p(\x)\{1-H_0(t-\mid \x) \}]f_{\X}(\x)} ,  
\end{equation}
where $f_{Y,A,\X}$ is the joint density of the observed variables. In particular, it holds   $H_1(dt\mid\x)f_{\X}(\x)=f_{Y,A,\X}(t,1,\x )dt$.
The values $f_{Y,A,\X}(t,1,\x )$  can be recursively estimated by 
\begin{equation}\label{sum_write3}
	\widehat{f}_{n}(t,1,\x  ) =  \frac{1}{n}\sum_{i=1}^n A_i(2-A_i) U_{f,i}(t,\x  ) ,
\end{equation}
with $U_{f,i}(t,\x  )$ defined in \eqref{U_f}. Meanwhile, the denominator and the numerator of $p(\x)$ in \eqref{inv_pPR}, and the $\{1-H_0(t-\mid \x)\}f_{\X}(\x)$, $k=0,1$, can be estimated by suitably replacing the indicators $ \1 \{Y_{i}\geq t \}  $  in the definition of $\widehat{R}_{n}(t,\x  ) $ in \eqref{sum_write} and \eqref{U_R}. The details are omitted. 

The incompleteness mechanism defined in \eqref{PR_Xx} is a combination of current status and right censoring. A representation like \eqref{eq:haz_3} can be similarly derived for the current status combined with left censoring, see \cite{PR2006}. 

\subsubsection{Competing risks}

In the competing risks context, the individuals can potentially experience more than one type of a certain event, but the occurrence of one type of event will prevent the occurrence of the others. Let $R_j$ be the time until the event of type $j$, with $1\leq j\leq J$, and let $T=\min\{R_j:1\leq j\leq J\}$. Following the notation of \cite{FG1999}, the observed variables are $(Y,\Delta, \Delta \varepsilon, \X)$ where $Y=\min(T,C)$, with $C$ the right censoring variable and $\Delta=\1\{T\leq C\}$, and $\varepsilon =j$ if  $T=R_j$. In the most realistic framework, the variables $R_j$ are not necessarily independent. In this case, the focus is on the cumulative incidence functions (CIF), see \cite{FG1999}, \cite{KK2012}. The conditional version of the CIF are  $F_{R_j\mid \X}(t\mid \x)=\PP(T\leq t , \varepsilon =j\mid \X=\x)$, $1\leq j \leq J$. 

Since the interest here lies in the CIF, we derive a convenient representation of the derivative of $F_{R_j\mid \X}(t\mid \x)$ with respect to $t$, similar to that for the conditional hazard discussed above. For this purpose,  it is worth noting that the setup of competing risks with right censoring can be equivalently defined using the $J$ individual times $R_j$, each of them censored by $C$. We consider the identification assumption $\{R_1,\ldots,R_J\}\perp C\mid \X$, which implies $T\perp  C \mid \X$. For some 
$1\leq j \leq J$ corresponding to the type of event of interest, let
$$
H_j(t\mid \x) =  \PP (Y\leq t,\Delta=1, \Delta\varepsilon = j\mid \X=\x), 
$$
and assume it admits the derivative $f_{Y,\Delta,\Delta\varepsilon\mid \X}(t,1,j\mid \x)$. Then the derivative (intensity)  $	f_{R_j\mid \X}$ of the conditional CIF of $R_j$ given $\X=\x$ can be written as 
\begin{equation}\label{cond_CIF}
	f_{R_j\mid \X}(t\mid \x) = S_{T\mid \X}(t\mid \x)\frac{f_{Y,\Delta,\Delta\varepsilon\mid \X}(t,1,j\mid \x)}{\EE(\1\{Y\geq t\}\mid \X=\x)},
\end{equation}
with the conditional survival $S_{T\mid \X}$ from \eqref{eq:S_1}. 
Note that $f_{Y,\Delta,\Delta\varepsilon\mid \X}(t,1,j\mid \x)f_{\X}(\x)$ can be estimated as in \eqref{eq:fRNWuni} by simply replacing $\delta_i$ with $\1\{\Delta_i\varepsilon_i=j\}$. Our representation \eqref{cond_CIF}, and the recursive kernel estimator that can be constructed using it, provide a model-free alternative to the proportional hazards approach of \cite{FG1999}.

\subsubsection{Bivariate time-to-event}

The approach can also be extended to bivariate durations under right censoring of each component. See \cite{D1988} and \cite{E2022}. In this case, there are three representations like \eqref{eq:haz_1} and \eqref{eq:haz_2} that can be established, and for each of them recursive estimates of the denominator and the numerator can be constructed. The details are omitted. 

\subsubsection{Open problems}

We have shown in this section that convenient conditional hazard representations depending on the distribution of the observed variables can be constructed in the presence of several other types of incompleteness. As soon as such representations can be derived, recursive estimators such as  proposed in Section \ref{sec:chap1-model-estimation} for the case of random right censoring can be constructed. Combining different types of incompleteness, the range of possible applications of our approach can be further extended. Such additional extensions are left for future study.

However, there are other situations where it is not clear whether there is a simple (conditional) hazard function representation based on the characteristics of the distribution of the observed variables. Interval censoring, see \cite{G1994}, and the left and right censoring model introduced by \cite{T1974}, are few examples. With interval censoring, the joint density in the numerator of the representation can no longer be identified from the data. In \cite{T1974}'s model there are two censoring variables $L\leq R$ and the observed variables are $Y =   \max[\min(T ,R),L] = \min[\max(T , L),R]$ and 
$A = 1$ if $L<T \leq R$ (no censoring), $A=0$ if $R<T$ (right censoring), and $A=2$ if $T\leq L$ (left censoring). 
Then, with the definition of $H_k$ as in \eqref{Hk_rolin} and obvious notation, the system \eqref{model_eq1PR} relating the conditional distribution of $(L,T,R)$ to the conditional distribution of the observed variables, becomes
\begin{equation}\label{model_Turn}
	\left\{
	\begin{array}{ccl}
		H_0(dt\mid\x)  & = & S_{T\mid \X}(t\mid\x) F_{R\mid \X}(dt\mid\x), \\
		H_1(dt\mid\x)  & = &  \{F_{L\mid \X}(t-\mid\x) - F_{R\mid \X}(t\mid\x) \} F_{T\mid \X}(dt\mid\x)\\
		H_2(dt\mid\x)  & = & F_{T\mid \X}(t\mid\x)F_{L\mid \X}(dt\mid\x)
	\end{array}
	\right. .
\end{equation}
This system does not admit an explicit solution leading to a representation similar to \eqref{eq:haz_3}.

\subsection{Product-integral representation of the survival function}

Given the conditional hazard, the conditional cumulative hazard $\Lambda_{T\mid \X}$ is obtained by integration and the conditional survival function follows by \eqref{eq:S_1}. It is worth noting that $\Lambda_{T\mid \X}$ has a simple integral representation even when the conditional distribution of the variable of interest $T$ does not admit a density. Indeed, from the facts presented above, we have
\begin{equation}\label{Lambda_rep}
	 \Lambda_{T\mid \X} (t\mid \x)= \int_{0}^t\frac{H_1(ds\mid \x)}{D(s\mid \x)},\qquad t\geq 0,
\end{equation}
where $H_1(Y\leq s \mid \x) = \PP (Y\leq s, Y=T \mid \X = \x )$, $H_1(ds\mid \x)$ is the associated measure,  and $D(s\mid \x)$ is some regression-like function given by the incompleteness mechanism.  Here, $Y$ is observed instead of $T$ due to an incompleteness mechanism as presented in Section \ref{sec_ext}. Let $I$ be an observed incompleteness indicator, taking the value 1 when $Y=T$. For example $I=\delta$ with left  or right censoring, and $I=A$ with the modified current status. 

The denominator  $D(s\mid \x)$ is a closed form function of the distribution of the observed variables. For example, $D(s\mid \x)=R(s, \x)/f_{\X}(\x)$ for the right censoring case, 
$$
D(s\mid \x) = \mathbb P (L\leq s \leq Y \mid \X=\x)- \{1-\phi(\x)\}\mathbb P (L\leq s \leq C \mid \X=\x),
$$
with the LTRC mechanism and a cure proportion if $C$ is always observed, including the case $\phi(\x)=1$ (no cure) when $C$ does not have to be always observed, and 
$$
D(s\mid \x) =  1-H_1(s-\mid \x)+p(\x)\{1-H_0(s-\mid \x) \},
$$
for the modified current status model. 

The conditional measure $H_1(ds\mid \x)$ does not require a density. However, most of the applications are covered by the assumption that $H_1(ds\mid \x)$ can be decomposed into an atomic measure, with $\mathcal D(t;\x)$ the set of atoms in $[0,t]$ given $\X=\x$, and an absolutely continuous measure $H^c _1(ds\mid \x)$. If $f^c_{Y,I\mid \X}(\cdot,1\mid \x)$ is the 
derivative of $H^c _1(\cdot\mid \x)$ given $\X=\x$, 
we get
\begin{multline}\label{Lmbda_c}
		 \Lambda_{T\mid \X} (t\mid \x)= \int_{0}^t\frac{f^c_{Y,I\mid \X}(s,1\mid \x)}{D(s\mid \x)}ds + \sum_{s\in \mathcal D(t;\x)} \frac{H_1(\{s\}\mid \x)}{D(s\mid \x)}\\=:\Lambda^c_{T\mid \X} (t\mid \x)+\sum_{s\in  \mathcal D(t;\x)}\Lambda_{T\mid \X} (\{s\}\mid \x) .
\end{multline}
By product-integration \citep{GJ1990}
the conditional survival can be written as 
\begin{equation}\label{prod_intX}
	S_{T\mid \X} (t\mid \x) = \Prodi_{s\in[0,t]}\left(1-\Lambda_{T\mid \X} (ds\mid \x)  \right)= \exp\{-\Lambda^c_{T\mid \X} (t\mid \x) \} \Prodi_{s\in  \mathcal D(t;\x)}\left(1-\Lambda_{T\mid \X} (\{s\}\mid \x)  \right).
\end{equation}
Recursive  estimators, as considered in Section \ref{sec:chap1-model-estimation}, can be used for $f^c_{Y,I\mid \X}(s,1\mid \x)$, $H_1(\{s\}\mid \x)$ and $D(s\mid \x)$, from which  estimators for $\Lambda_{T\mid \X} (t\mid \x)$ and $S_{T\mid \X} (t\mid \x)$ can be constructed. 

\section{Asymptotic results}\label{sec:chap1-model-theory}

For simplicity, the theoretical study focuses mainly on the case of right censoring. However, the theory for the estimators to be defined in the situations from Section \ref{sec:chap1-model-discussion} is similar up to simple changes. The proofs of the results in this section are given in the Appendix. 

The asymptotic properties of the RCH estimator $\widehat \lambda _{T\mid \X, n}$ in \eqref{eq_rec_haza} can be derived from those of $\widehat{f}_{n}$ and $\widehat R_n$. The asymptotics for  $\widehat S _{T\mid \X, n}$ follows using the relationship \eqref{eq:S_1}. 
For the asymptotic theory for $\widehat{f}_{n}$ and $\widehat R_n$ on compact intervals, we consider a generic recursive estimator 
\begin{equation}\label{generic}
\widehat{g}_{n}(t,\x) =  \sum_{i=1}^n \omega_{n,i} D_i(t) \mathbb K_{\mathfrak h(i)} (\Z_i-\z  )  \1 \{  \X_{i,d} = \x_d\}, \qquad t\in \mathcal T, \x_d\in \mathcal X_d,\z\in \mathcal Z,  
\end{equation}
where $\mathcal T \subset [0,\infty)$ is some compact interval in the support of $T$, $\z =(t,\x_c)$ or $\z=\x_c$,  and
\begin{itemize}
\vspace{-.2cm} 	\item $\Z_i$ are distributed as the continuous random vector $\Z\in\mathcal Z$ involved in the smoothing, that means $\Z = (Y,\X_c)$ and $\mathcal Z = \mathcal T \times \mathcal X_c\subset \mathbb R^{1+d_c}$, or $\Z=\X_c$ and $\mathcal Z =  \mathcal X_c\subset \mathbb R^{d_c}$;
\vspace{-.2cm} 	\item  $\X_{i,d}$ are distributed as $\X_d\in \mathcal X_d$ which gathers the discrete predictors;
\vspace{-.2cm} 	\item $D_i(\cdot)$ are  random functions distributed as  $D(t)\in [0,1]$,  $t\in \mathcal T$;
\vspace{-.2cm} 	\item $\mathbb K_{\mathfrak h}(\z ) = \mathfrak h^{-p} \mathbb K (\z/ \mathfrak h)$ with $\mathbb K (\cdot)$ a multivariate kernel, and  $\mathfrak h(i)>0$ are bandwidths;
\vspace{-.2cm} 	\item $\omega_{n,i} $, $1\leq i \leq n$, are positive, non random weights that sum to 1. 
\end{itemize}
\vspace{-.1cm} The estimator $\widehat{g}_{n}$ becomes $\widehat{f}_{n}$ from \eqref{sum_write} if $\Z=(Y,\X_{c})$, $D(t)\equiv \1\{T\leq C\}$,   $\mathbb K$ is the kernel product between $L$ and $\boldsymbol K$, and $\omega_{n,i}=1/n$. For simplicity,
we set $\mathfrak h(i)=b(i)=h(i)$ for $\widehat{f}_{n}$, which is reasonable if all variables have been rescaled beforehand. 
If  $D(t)=\1\{Y\geq t\}$,   $\Z=\X_{c}$, $\mathbb K=\boldsymbol K$ and $\mathfrak h(i)=h(i)$, $\omega_{n,i}=1/n$, we get $\widehat{g}_{n}$ equal to  $\widehat{R}_{n}$ in \eqref{sum_write}. 


\medskip

\begin{assum}\label{assump1}
	\begin{enumerate}[I)]
\vspace{-.2cm} 	\item\label{xia19} The support $\mathcal X_d$ of the discrete predictors is finite, and the set $\mathcal Z$   is a bounded hyperrectangle in $\mathbb R^p$ with $p=d_c+1$ (for $\widehat f_n$) or $p=d_c$ (for $\widehat R_n$). 
		
\vspace{-.2cm} 	\item\label{sample} The observations $(D_i(\cdot), \Z_i,\X_{i,d})$, $1\leq i\leq n$, are i.i.d. distributed as  $(D(\cdot), \Z,\X_{d})$.

\vspace{-.2cm} 	\item\label{xia2}  For any $\x_d\in\mathcal X_d$, the joint density $f_{Y,\delta,\X}(t, 1, \x)$ admits  partial  derivatives of second order with respect to $(t,\x_c)$ that are uniformly continuous on  $\mathcal Z = \mathcal T \times \mathcal X_c$.

\vspace{-.2cm} 	\item\label{xia2b}  For any $t\in\mathcal T$, $\x_d\in\mathcal X_d$, $R(t,\x)$ admits partial  derivatives of second order with respect to $\x_c$ over $\mathcal Z = \mathcal X_c$ and their modulus of continuity is independent of $t$.

\vspace{-.2cm} 	\item\label{xia2c} The realizations of $D(\cdot)$ are non-increasing, 
and $\inf_{(t,\x)\in\mathcal T \times \mathcal X}R(t,\x)>0$.
		
\vspace{-.2cm} 	\item\label{ker1}   $\mathbb K(\cdot)$ is a product kernel obtained with a symmetric and Lipschitz continuous density $K(\cdot)$  supported on $[-1,1]$, and  $\mu_2(\mathbb K)   = \int_{-1}^1 u^2   K (u) du$. Moreover,  $\mathfrak h(i)=ci^{-\alpha}$, with $\alpha \in (0,1/p)$, $0<c\in[\underline c, \overline c]$. The weights are $\omega_{n,i}=i^{\beta}/\sum_{j=1}^n j^{\beta}$, with $0\leq \beta \leq \alpha p<1$.  
	\end{enumerate}
\end{assum}

The conditions in Assumption \ref{assump1} are standard. The finite support of the discrete covariates  is assumed for convenience. It could be relaxed at the expense of more complicated writings, but it does not affect the main findings below. The requirement that $D(\cdot)$ is non-increasing is imposed for technical convenience, as it allows a simple argument in the proof of uniform convergence. Relaxing this condition, for example by requiring only bounded variation, is straightforward. The bandwidth condition in Assumption \ref{assump1}-\ref{ker1} makes $\widehat{g}_{n}$ to depend on $c$, and the uniform convergence with respect to $c$ will allow for data-driven bandwidths with a given decrease rate $\alpha$. However, for the sake of simplicity, the dependence of $\widehat{g}_{n}$ on $c$ is omitted from the notation here. The choice of  $\omega_{n,i}$, replacing $w_i/W_n$ in \eqref{eq:alt_rec}, allows the theory to include some more general recursive estimators as discussed above. 

We first study the uniform rate of convergence  of the stochastic part of our estimator $\widehat{g}_{n}$. 


\begin{proposition}\label{stoch1}
	If the conditions \ref{xia19}, \ref{sample}, \ref{xia2c} and \ref{ker1} in  Assumption \ref{assump1} hold true,   then
	$$
	\sup_{c\in[\underline c, \overline c]} \sup_{ t\in \mathcal T} \sup_{\x \in\mathcal X} \left| \widehat{g}_{n}(t,\x) - \EE \left[\widehat{g}_{n}(t,\x)\right]\right| =  O_\PP \left( n^{-(1-\alpha p)/2}\sqrt{\log n } \right).
	$$
\end{proposition}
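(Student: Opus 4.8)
The plan is to establish the uniform rate via a now-standard chaining/covering argument, treating the supremum over the three parameters $c\in[\underline c,\overline c]$, $t\in\mathcal T$ and $\x=(\x_c,\x_d)\in\mathcal X$ simultaneously. Since $\mathcal X_d$ is finite, it suffices to prove the bound for each fixed $\x_d$ and take a maximum over the finitely many values; so I will fix $\x_d$ and work with the continuous index $(\c,t,\x_c)$, which lives in a bounded subset of $\mathbb R^{1+p}$. Write $\widehat g_n(t,\x)-\EE[\widehat g_n(t,\x)] = \sum_{i=1}^n \omega_{n,i}\,\xi_{n,i}(\c,t,\x_c)$, where $\xi_{n,i}$ is the centered summand $D_i(t)\mathbb K_{\mathfrak h(i)}(\Z_i-\z)\1\{\X_{i,d}=\x_d\} - \EE[\cdots]$. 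The key quantitative inputs are: each summand is bounded in absolute value by $C\,\mathfrak h(i)^{-p} \le C\,(\underline c)^{-p} i^{\alpha p}$ (because $D\in[0,1]$, $\mathbb K$ is bounded with support in a box of side $2\mathfrak h(i)$), and its variance is $O(\mathfrak h(i)^{-p}) = O(i^{\alpha p})$, using that the relevant densities ($f_{Y,\delta,\X}$ or $f_\X$) are bounded on the compact $\mathcal Z$ (which follows from the smoothness in \ref{xia2}--\ref{xia2b}) and a change of variables in the kernel integral.

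Next I compute the order of the normalizing quantities. With $\omega_{n,i}=i^{\beta}/W_n$, $W_n=\sum_{j\le n}j^\beta \asymp n^{\beta+1}$, the variance of the sum is $\sum_i \omega_{n,i}^2 \Var(\xi_{n,i}) = O\big(W_n^{-2}\sum_{i\le n} i^{2\beta+\alpha p}\big) = O\big(n^{-2(\beta+1)}\cdot n^{2\beta+\alpha p+1}\big) = O(n^{-(1-\alpha p)}\,n^{-2\beta}\cdot n^{2\beta})$; more carefully, $W_n^{-2}\sum i^{2\beta+\alpha p} \asymp n^{-2\beta-2} n^{2\beta+\alpha p+1} = n^{\alpha p -1}$, so the pointwise standard deviation is of order $n^{-(1-\alpha p)/2}$, matching the claimed rate up to the $\sqrt{\log n}$ factor. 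The envelope (maximal summand times weight) is $O(W_n^{-1} n^{\beta}\cdot n^{\alpha p}) = O(n^{\alpha p -1})$, which is $o\big(n^{-(1-\alpha p)/2}\sqrt{\log n}\big)$ precisely because $\alpha p<1$; this is what makes a Bernstein-type bound give a pure $\sqrt{\log}$ inflation rather than something worse.

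For the uniformity I would use a discretization argument. Cover the bounded parameter set $[\underline c,\overline c]\times\mathcal T\times\mathcal X_c$ by a grid of $N_n \asymp n^{q}$ points for a sufficiently large polynomial power $q$ (mesh $\asymp n^{-q/(1+p)}$). On the grid, apply Bernstein's inequality to $\sum_i\omega_{n,i}\xi_{n,i}$ at each point with deviation level $M n^{-(1-\alpha p)/2}\sqrt{\log n}$; the variance term contributes $\exp(-cM^2\log n)=n^{-cM^2}$ and the Bernstein correction term is negligible by the envelope estimate above, so a union bound over $N_n$ points is summable once $M$ is large enough. Off the grid, I control the oscillation: $\c\mapsto \mathfrak h(i)^{-p}=c^{-p}i^{\alpha p}$ and $\z\mapsto \mathbb K_{\mathfrak h(i)}(\Z_i-\z)$ are Lipschitz in $(\c,\z)$ with constant $O(i^{(\alpha+1/p)p})=O(i^{\alpha p+1})$ (using Lipschitz continuity of $K$ from \ref{ker1}); multiplying by $\omega_{n,i}$ and summing gives a Lipschitz modulus $O(n^{\alpha p})$ for $\widehat g_n$ (and for its mean). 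Choosing $q$ large makes the grid fine enough that the fluctuation between a point and its nearest grid point is $O(n^{\alpha p - q/(1+p)})$, which is $o(n^{-(1-\alpha p)/2})$ for $q$ large. Combining the grid bound and the oscillation bound yields the claim.

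The main obstacle, and the place requiring the most care, is the interplay between the growing per-term magnitude $\mathfrak h(i)^{-p}\asymp i^{\alpha p}$ and the weights $\omega_{n,i}$: one has to check that the Bernstein "variance" and "range" quantities for the \emph{weighted} sum both scale so that the single-point tail is $n^{-cM^2}$ with the range term genuinely subdominant. The constraint $\beta\le\alpha p<1$ in \ref{ker1} is exactly what is needed here (it keeps $\sum\omega_{n,i}^2 i^{\alpha p}$ of the right order and the envelope small), and I would make sure the proof uses it explicitly. Everything else — boundedness of densities, the change of variables putting $\EE[\widehat g_n]$ close to a convolution, Lipschitz bounds on the kernel — is routine given Assumption \ref{assump1}.
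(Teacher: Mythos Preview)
Your overall strategy---pointwise Bernstein plus a polynomial grid plus an oscillation bound---is the same as the paper's, and the pointwise calibration (variance $\asymp n^{\alpha p-1}$, envelope $\asymp n^{\alpha p-1}$) is correct. However, there is a genuine gap in your uniformity step: your oscillation control only handles the dependence on $(c,\z)$ through the Lipschitz kernel $\mathbb K_{\mathfrak h(i)}(\Z_i-\z)$, but in the case $\widehat g_n=\widehat R_n$ the parameter $t$ enters through $D_i(t)=\1\{Y_i\ge t\}$, not through the kernel (there $\z=\x_c$ only). The map $t\mapsto D_i(t)$ is a step function, so no Lipschitz bound is available; the weighted sum $\sum_i\omega_{n,i}D_i(t)\mathbb K_{\mathfrak h(i)}(\X_{i,c}-\x_c)$ has $n$ jumps in $t$ and your mesh argument in the current form does not control the increment between adjacent grid points.

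This is exactly where Assumption~\ref{assump1}-\ref{xia2c} (monotonicity of $D(\cdot)$) is used, and your sketch never invokes it. The paper's remedy is a sandwiching argument: since $D_i(\cdot)$ is nonincreasing and $\mathbb K\ge 0$, for consecutive grid points $t_\ell<t<t_{\ell+1}$ one has $G_i(t_{\ell+1},\ldots)\le G_i(t,\ldots)\le G_i(t_\ell,\ldots)$ almost surely, and the gap between the \emph{expectations} $\EE[G_i(t_\ell,\ldots)]-\EE[G_i(t_{\ell+1},\ldots)]$ is $O(n^{-1})$ on a sufficiently fine grid (this uses only smoothness of the expected value, not of the realization). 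One then bounds the centered process at $t$ from above and below by centered processes at grid points plus an $O(n^{-1})$ correction, and the union bound applies to the grid only. You should add this step; without it, the proof does not cover the $\widehat R_n$ case. A secondary point: you justify the variance bound via conditions \ref{xia2}--\ref{xia2b}, but those are not among the hypotheses of the proposition; the required boundedness should instead be extracted from \ref{xia19} and \ref{xia2c} (or stated as an implicit standing assumption), as the paper does.
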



\begin{remark}
For simplicity, we consider here the i.i.d. framework. The uniform convergence in Proposition \ref{stoch1} can be obtained once suitable concentration inequalities for sums of centered variables are available, in particular under suitable types of weak dependence for stationary series. For example, a concentration inequality under $\varphi-$mixing dependence is used by \cite{WL2004} to derive strong uniform convergence of recursive estimators similar to our $\widehat{g}_{n}$. Alternatively, the short-range dependence condition considered by \cite{LXW2013} and their Rosenthal and Nagaev-type inequalities can be used. 
\end{remark}


\begin{remark}
In the case where $\mathcal T$ includes the left endpoint of the support of $T$, a boundary correction of the kernel estimator   is generally necessary. Proposition \ref{stoch1} remains true for a wide choice of boundary corrections. For example, with the correction mentioned in  Remark \ref{boundary_t}, the estimator  $\widehat{g}_{n}(t,\x)$ can be split into two parts and each part will satisfy the conditions required to derive the rate of uniform convergence as in Proposition \ref{stoch1}.
\end{remark}


To study the bias of $\widehat g_n$, let us denote its limit by $G(t,\x)$, that means 
\begin{equation}\label{eq:def-GR}
	G(t,\x)= f_{Y,\delta,\X}(t, 1, \x) \qquad  \text{ or } \qquad   G(t,\x)=R(t, \x), 
\end{equation}
depending on the weather $\widehat g_n$ estimates the numerator or denominator in \eqref{eq_rec_haza}.
Since the rate of decrease of the bias is expected to deteriorate near the boundary, the next result is stated for the set of values $(t,\x)$ away from the boundary of the support of $(T,\X)$. For $\epsilon >0$, let $\mathcal X_{\epsilon} =\{(\x_c,\x_d)\in\mathcal X: \|\x_c-\x^\prime\|\geq \epsilon, \forall \x^\prime \in \mathbb R^{d_c}\setminus\mathcal X_c\} $
and  $\mathcal T_\epsilon = [\epsilon, \max \mathcal T]$.



\begin{proposition}\label{bias1}
	If the conditions of  Assumption \ref{assump1} hold true, then, for any $\epsilon>0$,
	$$
		 	\sup_{c\in[\underline c, \overline c]}  \sup_{t\in\mathcal T_\epsilon } \sup_{\x\in  \mathcal X_\epsilon}  \left|  \EE \left[\widehat{g}_{n}(t,\x)\right] - G(t,\x) \right| = O\left( n^{-2\alpha}\right).
	$$
\end{proposition}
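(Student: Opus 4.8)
I would prove Proposition~\ref{bias1} by reducing the expectation to a deterministic weighted average of ``per-observation biases'', expanding each of those to second order, and then recombining, keeping careful track of uniformity in $c$, $t$ and $\x$. First, by the i.i.d.\ condition (Assumption~\ref{assump1}-\ref{sample}), $\EE[\widehat g_n(t,\x)] = \sum_{i=1}^n \omega_{n,i}\, b_i(t,\x)$ with $b_i(t,\x)=\EE\big[D(t)\,\mathbb K_{\mathfrak h(i)}(\Z-\z)\,\1\{\X_d=\x_d\}\big]$, where $\z=(t,\x_c)$ and $D(t)=\delta$ for $\widehat f_n$, and $\z=\x_c$ and $D(t)=\1\{Y\ge t\}$ for $\widehat R_n$. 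Writing this expectation as an integral against the relevant (sub)density and substituting $u=(\zeta-\z)/\mathfrak h(i)$, one gets $b_i(t,\x)=\int_{[-1,1]^p}\mathbb K(u)\,\varphi\big(\z+\mathfrak h(i)u\big)\,du$, where $\varphi$ is the map $(t,\x_c)\mapsto f_{Y,\delta,\X}(t,1,\x_c,\x_d)$ for $\widehat f_n$ and, at fixed $t$, the map $\x_c\mapsto R(t,(\x_c,\x_d))$ for $\widehat R_n$; in both cases $\varphi(\z)=G(t,\x)$ in the notation of \eqref{eq:def-GR}.

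Next I would expand $b_i$ away from the boundary. Fix $\epsilon>0$ and pick a constant $i_0=i_0(\epsilon)$, independent of $n$ and of $c\in[\underline c,\overline c]$, with $\overline c\, i^{-\alpha}\sqrt p<\epsilon$ for $i>i_0$; for such $i$ and $(t,\x)\in\mathcal T_\epsilon\times\mathcal X_\epsilon$ the segment from $\z$ to $\z+\mathfrak h(i)u$, $u\in[-1,1]^p$, stays in the region where $\varphi$ is $C^2$ with second partials whose modulus of continuity is controlled by Assumption~\ref{assump1}-\ref{xia2} (resp.\ \ref{xia2b}, with a $t$-free modulus; near $\max\mathcal T$ one uses that $\mathcal T$ is a compact subinterval of the support). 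A second-order Taylor expansion, together with $\int\mathbb K=1$ and the symmetry of the product kernel $\mathbb K$ (which kills the first-order and the mixed second-order terms), gives
\[
b_i(t,\x)=G(t,\x)+\tfrac12\,\mu_2(\mathbb K)\,\mathfrak h(i)^2\,\Delta\varphi(\z)+\mathfrak h(i)^2\,\rho_i(t,\x),
\]
where $\Delta\varphi$ is the sum of the unmixed second partials (bounded on the compact $\mathcal T_\epsilon\times\mathcal X_\epsilon$) and $\sup_{c\in[\underline c,\overline c]}\sup_{\mathcal T_\epsilon\times\mathcal X_\epsilon}|\rho_i|\to0$ as $i\to\infty$, by uniform continuity of the second derivatives. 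For the finitely many indices $i\le i_0$ I would only use that $\mathbb K_{\mathfrak h(i)}$ is uniformly bounded there (since $\mathfrak h(i)\ge\underline c\, i_0^{-\alpha}$), so $|b_i(t,\x)|\le M$ for some constant $M$.

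Then I would recombine. Since $\sum_i\omega_{n,i}=1$, we obtain $\EE[\widehat g_n(t,\x)]-G(t,\x)=A_n(t,\x)+B_n(t,\x)+C_n(t,\x)$, with $A_n=\tfrac12\mu_2(\mathbb K)\Delta\varphi(\z)\sum_{i>i_0}\omega_{n,i}\mathfrak h(i)^2$, $B_n=\sum_{i>i_0}\omega_{n,i}\mathfrak h(i)^2\rho_i(t,\x)$, and $C_n=\sum_{i\le i_0}\omega_{n,i}\big(b_i(t,\x)-G(t,\x)\big)$. Using $\omega_{n,i}=i^\beta/\sum_{j\le n}j^\beta$ and $\mathfrak h(i)^2=c^2i^{-2\alpha}$, elementary integral comparison ($\sum_{j\le n}j^\beta\asymp n^{\beta+1}$ and $\sum_{i\le n}i^{\beta-2\alpha}\asymp n^{\beta-2\alpha+1}$ under Assumption~\ref{assump1}-\ref{ker1}) yields $\sum_i\omega_{n,i}\mathfrak h(i)^2\asymp n^{-2\alpha}$ uniformly in $c$, so $A_n=O(n^{-2\alpha})$ uniformly; $B_n$ is bounded, up to a Cesàro-type correction, by $(\sup_{i>i_0}|\rho_i|)\sum_i\omega_{n,i}\mathfrak h(i)^2=o(n^{-2\alpha})$; and $|C_n|\le 2M\sum_{i\le i_0}\omega_{n,i}=O(i_0^{\beta+1}/n^{\beta+1})$, which is negligible relative to $n^{-2\alpha}$ under Assumption~\ref{assump1}-\ref{ker1} (e.g.\ $\alpha<1/p\le 1/2$ for $\widehat f_n$, and similarly for $\widehat R_n$). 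Collecting the three bounds, all uniform over $c\in[\underline c,\overline c]$, $t\in\mathcal T_\epsilon$, $\x\in\mathcal X_\epsilon$, gives the result.

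The main obstacle is the \emph{uniformity}, not the one-observation expansion. One must make the Taylor remainder negligible simultaneously over $(t,\x)\in\mathcal T_\epsilon\times\mathcal X_\epsilon$, over the bandwidth constant $c\in[\underline c,\overline c]$, and over the summation index $i$. This is precisely why the boundary is excised by $\epsilon$: it guarantees that only a fixed, finite number of early (large-bandwidth) terms $i\le i_0$ can leave the smoothness region, and these are then discarded crudely through the small total weight $\sum_{i\le i_0}\omega_{n,i}$; and it is why Assumption~\ref{assump1}-\ref{xia2}/\ref{xia2b} demands \emph{uniformly} continuous second derivatives (with a $t$-independent modulus for $R$) rather than mere twice differentiability — a purely pointwise Taylor estimate would not give the uniform $\sup\rho_i\to0$ that makes $B_n$ of smaller order. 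Once these uniform remainder bounds are secured, the bookkeeping of the weighted sums against the target rate $n^{-2\alpha}$ is routine.
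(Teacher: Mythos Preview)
Your proof is correct and follows essentially the same approach as the paper: reduce $\EE[\widehat g_n]$ to a weighted sum of per-observation expectations, change variables, apply a second-order Taylor expansion (with kernel symmetry killing the first-order term), and aggregate via the Riemann-sum estimates of Lemma~\ref{serie_Riemann}. You are in fact more explicit than the paper about isolating the finitely many early indices $i\le i_0$ whose bandwidth may carry the kernel outside the smoothness region, and about the uniformity of the Taylor remainder; the paper handles both points tersely, writing $\{1+o(1)\}$ and noting only that the expansion holds ``provided $i$ is sufficiently large''.
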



Propositions \ref{stoch1} and \ref{bias1}  applied to the numerator and denominator in  \eqref{eq_rec_haza} leads to the following.


\begin{corollary}\label{cor_lambda1}
If Assumption \ref{assump1} holds true for the numerator and the denominator of the conditional hazard in \eqref{eq:haz_2}, and $\inf_{t\in\mathcal T, \x\in\mathcal X} R(t,\x) >0$, then, for any $\epsilon >0$,
	$$
\sup_{c\in[\underline c, \overline c]}  \sup_{t\in\mathcal T_\epsilon } \sup_{\x\in  \mathcal X_\epsilon}\left|   	\widehat \lambda _{T\mid \X, n} (t \mid \x) - 	 \lambda _{T\mid \X} (t \mid \x) \right| =O_\PP \left(n^{-2\alpha}+ \frac{\sqrt{\log n }}{n^{(1-\alpha (d_c+1))/2}}\right) .
	$$	
The optimal exponent for the bandwidth is $\alpha = 1/(d_c+5)$, which gives the rate of uniform convergence $O_\PP\left(n^{-2/(d_c+5)}\sqrt{\log n} \right)$ for the conditional hazard function estimator.
\end{corollary}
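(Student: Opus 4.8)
The plan is to derive the result by combining Propositions \ref{stoch1} and \ref{bias1}, applied separately to the numerator $\widehat{f}_{n}$ and the denominator $\widehat{R}_{n}$ of \eqref{eq_rec_haza}, and then controlling the ratio. First I would use the triangle inequality $|\widehat{g}_{n} - G| \leq |\widehat{g}_{n} - \EE \widehat{g}_{n}| + |\EE \widehat{g}_{n} - G|$. For $\widehat{g}_{n} = \widehat{f}_{n}$, where $p = d_c + 1$ and $G(t,\x) = f_{Y,\delta,\X}(t,1,\x)$, Propositions \ref{stoch1} and \ref{bias1} give, uniformly over $c \in [\underline c, \overline c]$, $t \in \mathcal T_\epsilon$ and $\x \in \mathcal X_\epsilon$,
\[
\bigl| \widehat{f}_{n}(t,1,\x) - f_{Y,\delta,\X}(t,1,\x) \bigr| = O_\PP\!\left( n^{-2\alpha} + n^{-(1-\alpha(d_c+1))/2}\sqrt{\log n} \right).
\]
For $\widehat{g}_{n} = \widehat{R}_{n}$, where $p = d_c$ and $G = R$, the same two propositions give the analogous bound with $n^{-(1-\alpha d_c)/2}$ in place of $n^{-(1-\alpha(d_c+1))/2}$; since $\alpha d_c < \alpha(d_c+1)$, this rate is strictly faster, so the estimation error of $\widehat{R}_{n}$ is dominated by that of $\widehat{f}_{n}$, and only the latter needs to be tracked. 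I would also record that, by Assumption \ref{assump1}-\ref{xia2}, $f_{Y,\delta,\X}(\cdot,1,\cdot)$ is continuous, hence bounded, on the compact set $\mathcal T \times \mathcal X$, so that $\lambda_{T\mid\X} = f_{Y,\delta,\X}(\cdot,1,\cdot)/R$ is bounded on $\mathcal T \times \mathcal X$ because $\inf_{\mathcal T\times\mathcal X} R > 0$.

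Next I would handle the denominator. Writing $\mathfrak r := \inf_{\mathcal T\times\mathcal X} R > 0$ and using that $\widehat{R}_{n} \to R$ uniformly over $c$ and over $\mathcal T_\epsilon \times \mathcal X_\epsilon$ (in particular $o_\PP(1)$ there, by the first step), there is an event $\mathcal A_n$ with $\PP(\mathcal A_n) \to 1$ on which $\inf_c \inf_{(t,\x) \in \mathcal T_\epsilon \times \mathcal X_\epsilon} \widehat{R}_{n}(t,\x) \geq \mathfrak r/2$. On $\mathcal A_n$ the elementary identity
\[
\widehat\lambda_{T\mid\X,n}(t\mid\x) - \lambda_{T\mid\X}(t\mid\x) = \frac{\widehat{f}_{n}(t,1,\x) - f_{Y,\delta,\X}(t,1,\x)}{\widehat{R}_{n}(t,\x)} + \lambda_{T\mid\X}(t\mid\x)\,\frac{R(t,\x) - \widehat{R}_{n}(t,\x)}{\widehat{R}_{n}(t,\x)}
\]
holds, and its right-hand side is bounded uniformly by $(2/\mathfrak r)\bigl( \sup |\widehat{f}_{n} - f_{Y,\delta,\X}| + \|\lambda_{T\mid\X}\|_\infty \sup |\widehat{R}_{n} - R| \bigr)$, which is $O_\PP(n^{-2\alpha} + n^{-(1-\alpha(d_c+1))/2}\sqrt{\log n})$ by the first step. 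Since $\PP(\mathcal A_n^c) \to 0$, this bound passes to an $O_\PP$ statement without restriction to $\mathcal A_n$: for any prescribed probability tolerance one enlarges the constant and discards the vanishing-probability complement, on which the convention $0/0 = 0$ is the only thing ever invoked.

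Finally I would optimize the bandwidth exponent. Dropping the logarithmic factor, the two terms $n^{-2\alpha}$ and $n^{-(1-\alpha(d_c+1))/2}$ are balanced precisely when $2\alpha = (1-\alpha(d_c+1))/2$, i.e. $\alpha = 1/(d_c+5)$, which is admissible since $1/(d_c+5) < 1/(d_c+1) = 1/p$; this gives the common exponent $2\alpha = 2/(d_c+5)$ and hence the rate $O_\PP(n^{-2/(d_c+5)}\sqrt{\log n})$ for $\widehat\lambda_{T\mid\X,n}$ uniformly over $c$, $\mathcal T_\epsilon$ and $\mathcal X_\epsilon$. The main obstacle I anticipate is the treatment of the denominator: establishing the uniform-in-$(c,t,\x)$ lower bound $\widehat{R}_{n} \geq \mathfrak r/2$ with probability tending to one, and verifying that combining the $0/0$ convention with the rare event $\mathcal A_n^c$ does not degrade the rate. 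Once that is in place, the rest is a routine assembly of Propositions \ref{stoch1} and \ref{bias1} through the ratio identity, together with the scalar optimization in $\alpha$.
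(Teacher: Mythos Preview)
Your proposal is correct and follows exactly the route the paper indicates: the corollary is stated there as an immediate consequence of Propositions~\ref{stoch1} and~\ref{bias1} applied to numerator and denominator, with no further details given, and you have supplied precisely the standard ratio decomposition and bandwidth balancing that this entails. One small cosmetic point: your remark that on $\mathcal A_n^c$ ``the convention $0/0=0$ is the only thing ever invoked'' is not quite accurate (the denominator could be small but nonzero there), but this is irrelevant since $\PP(\mathcal A_n^c)\to 0$ already suffices for the $O_\PP$ conclusion, as you in fact argue.
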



\begin{remark}
The rate $n^{-2/(d_c+5)} $ can be written as  $n^{-r/(2r +1)} $, where here $r= 2/(d_c+1)$ is the effective smoothness for an isotropic multivariate function of $d_c+1$ continuous variables when the second order derivatives exist. Recall that isotropic  means that the function has the same regularity, here the number of derivatives, in all variables. In the anisotropic case, $r(d_c+1)$ is defined as the harmonic mean of the regularity of the components. The rate $n^{-r/(2r +1)} $ is  optimal for the pointwise and the mean integrated squared error, in both isotropic and anisotropic case. See \cite{E2024}. The $\sqrt{\log n }$ factor is expected when considering the uniform rates of convergence. It is worth noting that the general approach we propose can be extended to the anisotropic case, at the expense of different bandwidths for different components of the vector $(Y,\Z)$ and more complicated writings, provided the effective smoothness $r$ is given.
\end{remark}


\begin{remark}
The construction of our recursive estimator requires bandwidth choices for   $\widehat f_n$ and $\widehat R_n$. For simplicity, we considered a same bandwidth for both. Note that $\widehat f_n$ requires smoothing in a higher dimension than  $\widehat R_n$. If both functions are isotropic with the same $r$,   $\widehat R_n$ converges  faster than $\widehat f_n$. For the purpose of  theoretical results under weaker assumptions, and for improving the small sample performance of the RCH  estimator, it is worth considering different bandwidths for $\widehat f_n$ and $\widehat R_n$. This is suggested by \cite{CM2020}, but such a refinement is beyond the scope of our current work.  
\end{remark}


\begin{remark} 
In the case of the estimator $\widehat f_n$, by the definition of $\mathcal Z_\epsilon$, Proposition \ref{bias1} excludes a small interval from the left-hand tail of $T$. The uniformity over the neighborhood of the left endpoint of $\mathcal T$ (assumed for simplicity to be the origin), can be achieved at the limit. More precisely, for $\widehat f_n$, we have  $\mathcal Z_\epsilon = \mathcal T_\epsilon \times \mathcal X_{c,\epsilon}$ and consider, for simplicity,  that $\mathcal T_\epsilon=[\epsilon,\max \mathcal T]$. Then, taking $\epsilon >0$ and a decreasing sequence, say,    $ \epsilon_n=n^{-\alpha \nu} $ with some suitable $\nu >0$ allows preserving the uniform rate of the bias in Proposition \ref{bias1} over $\mathcal T_{\epsilon_n}\times \mathcal X_{c,\epsilon}$, and thus let the domain of $t$ in the Corollary \ref{cor_lambda1} reach the origin at the limit. The exponent $\nu$ has to satisfy $0\leq \nu< (\beta+1-2\alpha)/ (\beta+1-\alpha)$. We can notice that taking larger $\beta >0$ allows for larger $\nu$ and thus smaller $\epsilon_n$. In other words, in theory, the choice of the weight $\omega_{n,i}$ in the definition of the estimator \eqref{generic} can help to alleviate the bias on the boundary of $\mathcal T$. See the discussion in the proof of Proposition \ref{bias1} for details. Finally, it should be pointed out that making $\epsilon_n$ to decrease to zero must be accompanied by a remedy that accounts for the non-vanishing bias that characterizes the usual kernel density estimates, such as a reflection of the points with respect to the origin. See the discussion in Remark \ref{boundary_t}. \end{remark}

\medskip

Concerning  the pointwise bias of the RCH estimator, from the proof of Proposition \ref{bias1} and using the first part of Lemma \ref{serie_Riemann}, if $\mathfrak h(i) = ci^{-\alpha}$ with  $\alpha = 1/(d_c+5)$, it can be shown that 
\begin{equation}\label{bias_lam1}
\EE \big[\; \widehat \lambda _{T\mid \X, n} (t \mid \x)\big ]-  \lambda _{T\mid \X} (t \mid \x) \\ = c^2\frac{n^{-2\alpha}(1\!+\!\beta)}{2(1+\beta \!-\!2\alpha) }\frac{\mu_2(\mathbb K) \operatorname{Trace}(\mathcal H_f(t,\x))}{R(t,\x)} 
\{1+o(1)\},
\end{equation}
where $\mathcal H_f(t,\x)$ is the Hessian of $f_{Y,\delta,\X}(t,1,\x)$ with respect to $(t,\x_c)$. 
On the other hand,  with $p=d_c+1$, 
the pointwise variance of conditional hazard estimator can be shown to be   
\begin{equation}\label{var_lam1}
	\operatorname{Var}\left( \widehat \lambda _{T\mid \X, n} (t \mid \x) \right) = \frac{c_{\mathbb K}}{n^{1 -\alpha p}} 
	\frac{ f_{Y,\delta,\X}(t,1,\x)} {c^pR^2(t,\x)}  \frac{(1+\beta)^2}{1+\alpha p+2\beta} \{1+o(1)\}.
\end{equation}
The optimal constant $c$ for defining the bandwidth $\mathfrak h(i)$ is simply obtained by minimizing the leading terms in the squared bias plus the variance, that is
\begin{equation}\label{opt_ct_RCH}
	c_{\rm opt}(\beta;\alpha,t,\x)=\left[\frac{p(1+\beta-2\alpha)^2c_{\mathbb K}f_{Y,\delta,\X}(t,1,\x)}{(1+\alpha p +2\beta)\{\mu_2(\mathbb K)\operatorname{Trace}(\mathcal H_f(t,\x))\}^2}\right]^{\frac{1}{p+4}}, \quad p=d_c+1, \alpha= 1/(p+4).
\end{equation}
Thus, the bandwidth minimizing the MSE of our RCH estimator is equal to the optimal bandwidth for the standard, non-recursive kernel estimator of $f_{Y,\delta,\X}(t,1,\x)$ multiplied by  
\begin{equation}\label{Cr_const}
c_r(\beta)  =   \left[\frac{\{d_c(1+\beta)+5\beta +3\}^2}{(d_c+5)\{2d_c(1+\beta)+10\beta+6\}}\right]^{1/(d_c+5)}<1.
\end{equation}
See \cite{KX2019} for the case $\beta=0$. A constant $c_r(\beta)<1$   is expected because a larger bandwidth used in the early recursions results in a large bias, which cannot be compensated for by the reduction in the variance. We deduce from above  that, if the estimation error for $R(t,\x)$ is negligible compared to that for $f_{Y,\delta,\X}(t,1,\x)$, the bandwidth rule for the RCH estimator can be derived from common rules for standard density kernel estimates.

\section{Empirical analysis}\label{sec:chap1-emp-analysis}



We first present simulation results for our recursive conditional hazard  estimator in the presence of  right censoring. The setup for the simulations  was designed to match the Rotterdam primary breast cancer dataset containing 2982 patients, as considered by \cite{RA2013}; see also \cite{FPLPSKBNJGHS2000}.  The event of interest is the occurrence of death from breast cancer after a surgery. Next, in Section \ref{sec:chap1-real-data} we analyze this real dataset.

\subsection{Simulation design}\label{sec:simu-implementation}

We consider two models for generating the time-to-event $T$: `Model 1' based on Cox's Proportional Hazard (CPH) model, \cite{C1972};  and `Model 2' using the Accelerated Failure Time (AFT) model, \cite{W1992}. Thus, the conditional hazards  given $\X=\x$ are defined as
$$
\lambda_{T\mid \X}(t\mid \x) = \lambda_0(t) \exp(\boldsymbol{x}^\top \boldsymbol{\beta})  \quad \text{ and }  \quad 
\lambda_{T\mid \X}(t\mid \x)=\lambda_0(t \exp(\x^\top \bbeta)) \exp(\x^\top \bbeta) ,
$$
respectively. Here, $\boldsymbol{x}^\top \boldsymbol{\beta}$ is the scalar product between $\boldsymbol{x}$ and $\boldsymbol{\beta}$. The  baseline hazard  is 
$$
\lambda_0(t) = \alpha_1 + \alpha_2 t + \alpha_3 t^2 + \alpha_4 \exp(\alpha_5 t).
$$
To generate a time-to-event $T$ starting from the conditional hazard function given $\X=\x$, we first compute the conditional survival function as the exponential of the minus conditional cumulative hazard function, and next use the inverse transform sampling method. 

The model coefficients $\boldsymbol{\beta}$ as well as the vector $\boldsymbol \alpha =( \alpha_1 ,\ldots,  \alpha_5)^\top $ are set to match the estimates obtained with the CPH when fitted to the breast cancer dataset. Details on the matching procedure are provided in the Supplement. The covariate vector $\X$, inspired by the  predictors available in the breast cancer dataset, is defined as follows. One continuous variable, named $\X_{c,1}$, is  supported on $[-3,3]$ with the distribution as the variable $3A-6$ where $A$ is generated from a   Beta distributions mixture  $\pi \mathcal B(a_1,b_1)+(1-\pi)\mathcal B(a_2,b_2)$, with parameters $a_1=17, b_1=10$,  $a_2=9, b_2=14$ and $\pi=0.4$.
A second continuous variable  $\X_{c,2}$ has a Gamma distribution   $\Gamma(0.38, 0.14)$, which means the expectation is approximately 2.709. The vector $\X$ is completed by three binary covariates. They are obtained from a multinomial variable with parameters $(0.47, 0.1, 0.43)$, with the first outcome accounting for a small tumor, and  a Bernoulli variable with parameter $0.51$ corresponding to the probability of the presence of a relapse in the breast cancer dateset. 
For Model 1 we set, 
$
\bbeta_{\rm CPH} = (0.24, 0.04, -0.69, -0.32, 1.9)^\top$   and    $\boldsymbol \alpha = (-20.32, -1.51, -0.08, 20.32, 0.08)^\top $,
while for Model 2 we keep the same $\boldsymbol \alpha$ but modify $\bbeta_{\rm AFT} = \bbeta_{\rm CPH}/4$ to make the hazard functions for the two models to have comparable scales. Finally, the censoring times $C$ are drawn from an exponential law of parameter $0.45$ and shifted by a suitable constant $\mu_C$ to obtain the  censoring proportions 20\%, 40\%, and 60\% for each type of simulation model.

\subsection{Simulation results}

Below we compare our RCH estimator of the conditional hazard  with that of
the semi-parametric CPH model fitted using the \texttt{R} package \texttt{survival}.  Given the estimate of $\bbeta$ by maximization of Cox's partial likelihood, the CPH estimate of the conditional hazard function is derived from Breslow's estimate (see \cite{B1972}) by finite differences, which are eventually smoothed by splines. Since Model 1 is a  proportional hazard model, we use the CPH estimator as benchmark for both Model 1 and 2.

For our method, we consider a product kernel constructed with the standard Gaussian density. Before estimating the joint density $f(t,1,\x)$ and the denominator $R(t,\x)$, all the continuous variables are standardized using the empirical standard deviations. The weights are set $\omega_{n,i}=1/n$.
For simplicity, the bandwidths $h(i)$ and $b(i)$ of the  estimator $\widehat{f}_{i} (t, 1, \x)$ and $g(i)$ of the estimator $\widehat R_i(t,\x)$ are considered of the form $c i^{-\alpha}$  with $\alpha = 1/(4+p)$, where $p=d_c+1$ and $p=d_c$ for $\widehat{f}_{i}$ and $\widehat{R}_{i}$, respectively. 
For the constant $c$ we adopt a simple rule-of-thumb. For  $\widehat{f}_{i} (t, 1, \x)$, we take it equal to the factor $c_r(0)$ in \eqref{Cr_const} multiplied by $\PP(\delta=1)^{1/(d_c+5)}$, which leads us to set the values of $c$ approximately equal to 0.656, 0.753 and 0.948 when the censoring rate is 20\%, 40\%, and 60\%, respectively. This reduction of $c_r(0)$ is justified by the fact that only the uncensored observations are involved in $\widehat{f}_{n}$. For the denominator, which has a faster rate of convergence, we consider $g(i)$ with $c=0.875$ for all censoring levels. Moreover, to avoid division by small values, we add $n^{-1}$ to the estimator $\widehat R_n(t,\x) $ in \eqref{eq_rec_haza}. Finally, we use a left boundary correction by reflexion in the numerator estimator, as discussed in Remark \ref{boundary_t}.  We consider sample sizes $n \in \{200, 400, 1000, 2000, 5000, 10000, 20000\}$ and each experiment is replicated $500$ times. 
The methods are evaluated on a  time grid $\mathcal G_T$ of 100 equidistant points in $ [0.1, 8]$. Regarding the covariate vector, we consider a grid $\mathcal G_{\X}$ of points by considering 25th, 50th, and 75th quantiles of the continuous covariates and the support of the discrete variables, which means 54 points. Finally, the global estimation error is evaluated by  
\begin{equation}\label{mise_crit}
	MISE = \frac{1}{100\times 54 } \sum_{t \in \mathcal G_T} \sum_{\x \in \mathcal G_\X} \left(\lambda(t|\x) - \widehat{\lambda}(t|\x) \right)^2.
\end{equation}

Figure \ref{fig:re-mse-main} shows the relative error on the logarithmic scale when the data are generated in Model 1 and Model 2. As expected, the CPH estimator performs better in Model 1 which is a Cox's PH setup. The log-ratios deteriorate as the sample size increases because the CPH estimator has a parametric rate of convergence. In the AFT setup our nonparametric approach remains consistent and eventually outperforms the CPH, which is not consistent.

\begin{figure}[ht]
    \centering
    \includegraphics[width=0.65\linewidth, height=8cm]{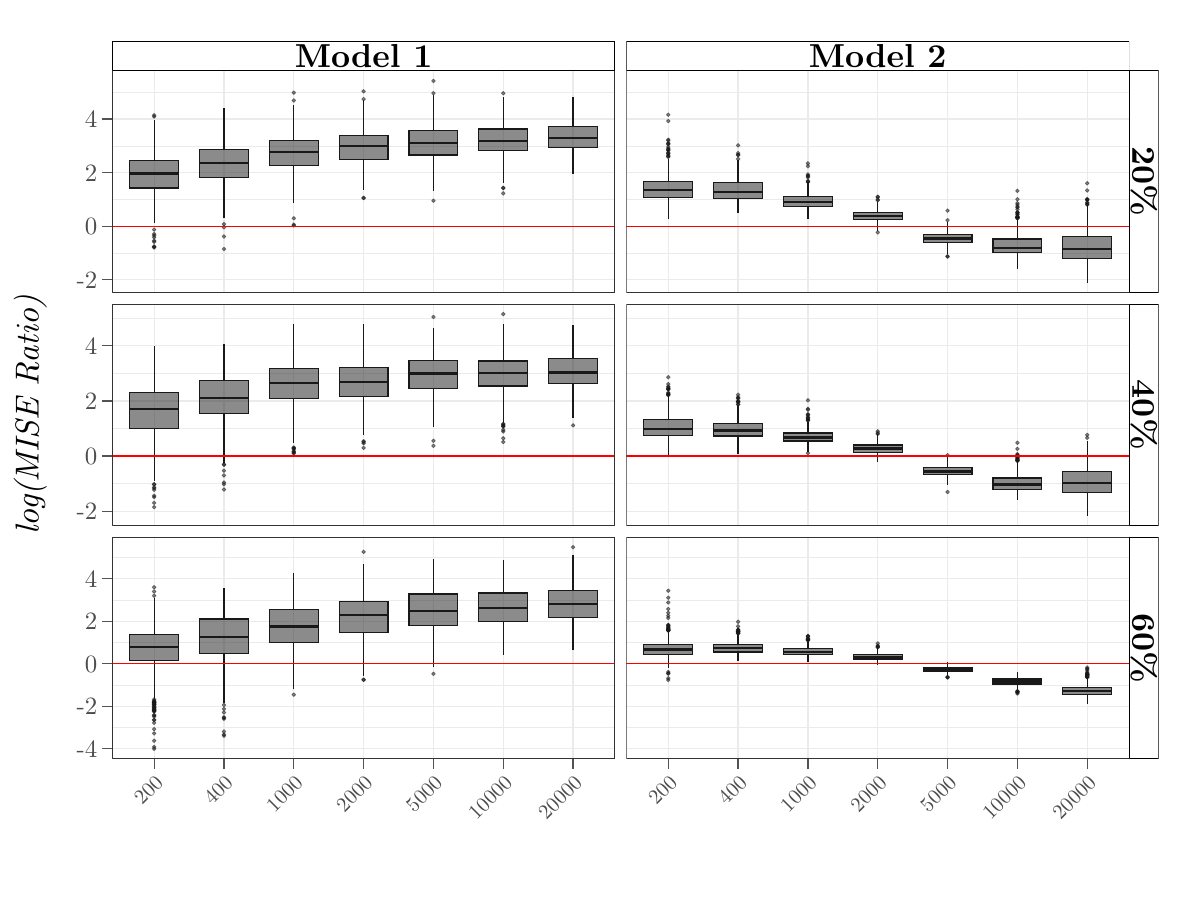}\vspace{-.6cm}
    \caption{\small The nonparametric RCH method compared with the conditional hazard function estimator obtained   in Cox's Proportional Hazard (CPH) model. The boxplots present the logarithm of the ratios of the MISE criterion \eqref{mise_crit} from $500$ replications, for different censoring rates in Model 1 and 2. Boxplots below the horizontal line indicate better performance for RCH.}
    \label{fig:re-mse-main}
\end{figure}

Figure \ref{fig:simu-comp-hazard} illustrates the differences between the estimation approaches for one covariate vector value, when the sample size is $n=2000$ and there is   20\% censoring. The conditional hazard is evaluated at the average value of the shifted Beta-mixture variable $\X_{c,1}$, the median value for the Gamma distributed $\X_{c,2}$, and the  small tumor modality for the multinomial variable. The top panels illustrate the scenario with no relapse, while the bottom panels consider the case with relapse. Our estimator satisfactorily recovers the shape of the conditional hazard function. As expected, the CPH estimator outperforms it in Model 1. However, RCH estimator performs better overall when the data are generated in the AFT model.

\begin{figure}[ht]
    \centering
    \vspace{-.2cm}
    \includegraphics[width=0.7\linewidth, height=7.8cm]{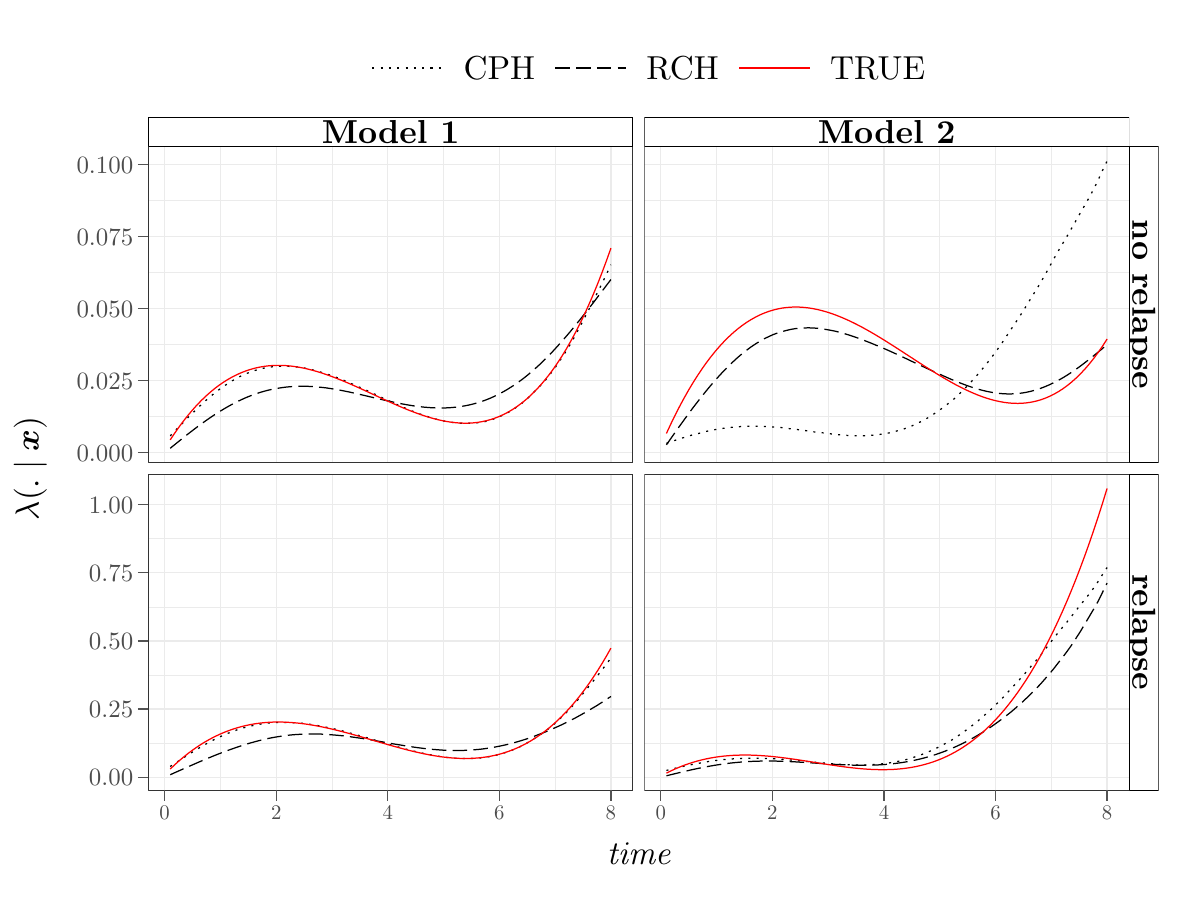}\vspace{-.2cm}
    \caption{\small Estimates for the conditional hazard with CPH and RCH  for an individual with the average age, median number of nodes, a small tumor,  in absence  (top) or presence  (bottom) of relapse, in Model 1 (left) and Model 2 (right). The plotted curves are the means over $500$ replications, with sample size $n=2000$ and    $20\%$ censoring. The true conditional hazard is the continuous line. \vspace{-.5cm}} 
    \label{fig:simu-comp-hazard}
\end{figure}


\subsection{Real data analysis}\label{sec:chap1-real-data}

We consider the Rotterdam primary breast cancer data set as considered by \cite{RA2013}; see also \cite{FPLPSKBNJGHS2000}.
The study includes 2982 patients treated for primary breast cancer following a surgery between 1978 and 1993. The event of interest is death due to breast cancer, with 1272 patients experiencing this outcome, resulting in a censoring rate of 57\%. Patients were followed for an average of 7.13 years, with visits every 3 to 6 months for the first 5 years and annually thereafter. The longest follow-up time was 19.5 years. Similarly to the empirical studies, we retained age, the number of positive lymph nodes, tumor size, and relapse occurrence as covariates. In this analysis the variable age is considered continuous, but the number of nodes is preserved as a discrete variable. Menopausal status, tumor grade, progesterone and \oe{}strogen receptor status, and treatment were excluded as they were not statistically significant in the Cox model estimated in Section \ref{sec:simu-implementation}. In the data, 48.5\% of patients had no positive lymph nodes, 12.3\% had one, and 8\% had two, with the fraction decreasing as the number of nodes increased. Tumor sizes were categorized as small (46.5\%), medium (10.2\%), and large (43.3\%). The median age at surgery was 54 years. The 25th (resp. 75th) quantile for the age is 45 (resp. 65) years old. Half of the patients in the sample have experienced the recurrence of the cancer during their follow-up.

Figure \ref{fig:app-surv} displays the estimated conditional survival functions for patients with various characteristics: no lymph nodes, small or large tumors, with or without relapse. The different lines correspond to different  ages at surgery. We compare estimates from the RCH method, with solid lines, and Cox's model  with dashed lines. Notably, our method generally yields more optimistic estimates. For instance, for patients with small tumors and relapse, the CPH estimates a survival probability below 0.25 after 12 years, regardless of age. However, among the 306 patients with small tumors, relapse, and no lymph nodes, 140 were censored after a median follow-up of 9 years, suggesting better survival than predicted by the CPH.

For large tumors, our method remains more optimistic, though the gap between RCH and CPH narrows, particularly for younger patients experiencing relapse. Additionally, the RCH estimates show that age influences survival outcomes differently across ages. For example, patients operated at age 40 with no relapse or lymph nodes have a near-zero estimated risk of dying from their cancer on the considered period according to the RCH which aligns with the data. Only two patients under 45 with no relapse died from breast cancer across tumor sizes, while 198 survived it. In contrast, the CPH predicts on average a 10\%–20\% share of these patients to die from breast cancer, despite their longer median follow-up of 9 years.
Conversely, younger patients who experience relapse have worse survival odds than older patients. In such cases, RCH and CPH estimates are similar for young patients but diverge for older patients. For instance, older patients (60–70 years) with relapse and large tumors have better survival estimates under the RCH than younger patients, while the CPH imposes worse odds for older patients due to the positive age coefficient. This flexibility in the RCH allows  to capture age- and relapse-dependent survival dynamics more accurately than the CPH. Additionally, reasons why the PH assumption is often unrealistic for cancer data can be found in \cite{SRL2007}. One situation where the RCH is less optimistic that the CPH is for the estimated survival probability for 70-year-old patients with large tumors and no relapse, with 72.5\% for the RCH and 80\% for the CPH. This is one of the few cases where the RCH produces lower estimates than the CPH. 
Overall, the most discriminant covariate for survival is the occurrence of relapse. Understanding the factors that lead to such relapse  requires a completely separate time-to-event analysis beyond our scope.

\begin{figure}[ht]
    \centering\vspace{-.2cm}
    \includegraphics[width=0.8\linewidth, height=8.79cm]{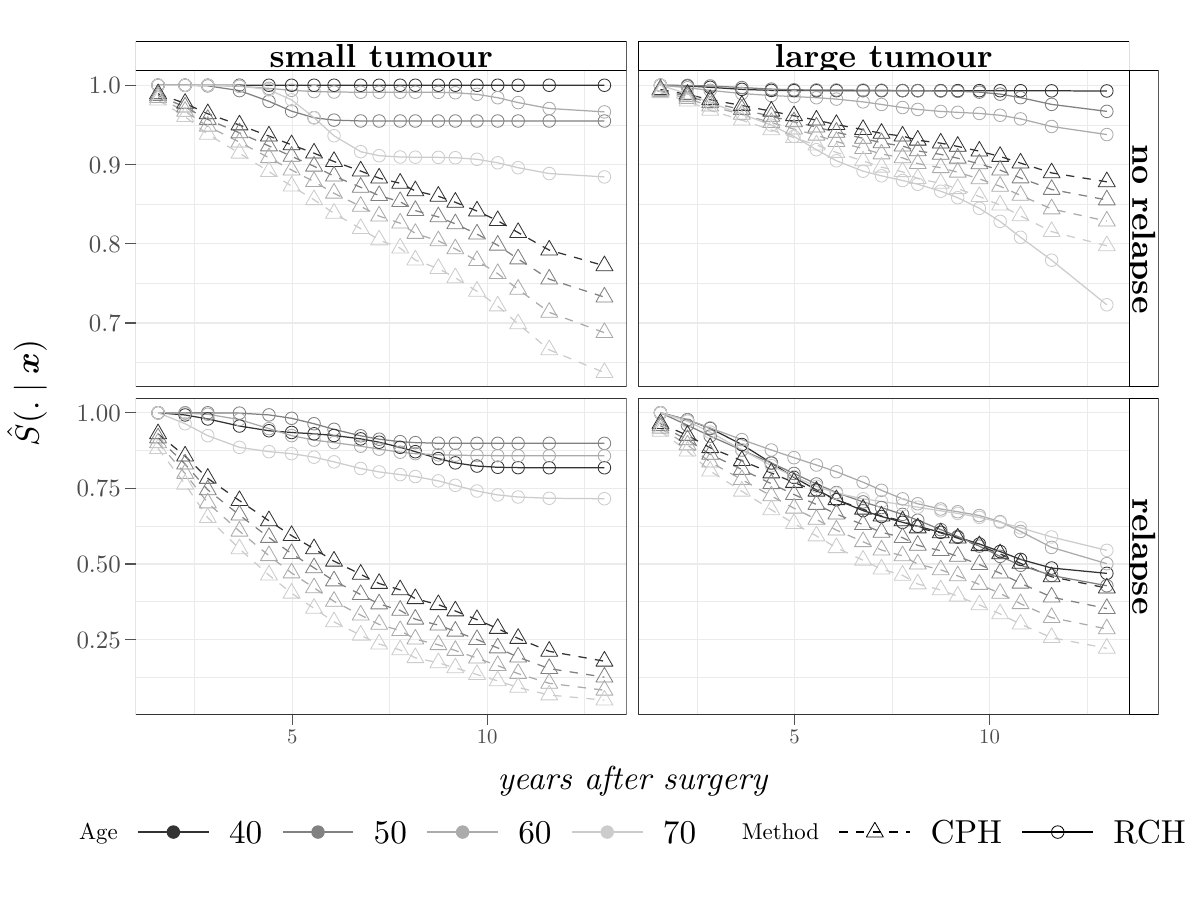}\vspace{-.3cm}
    \caption{\small Estimates of the conditional survival function of primary breast cancer patients. The estimates are produced by CPH and RCH methods and are displayed for patients with no node, and different size of tumor, age, and situation of relapse.}
    \label{fig:app-surv}
\end{figure}


\section{Conclusions}\label{sec:chap1-ccl}

In this paper, we propose a new, general perspective on nonparametric regression for time-to-event data subject to censoring and/or truncation, in the presence of competing risks, allowing for cured individuals, applicable to univariate or bivariate lifetime variables. The approach is based on a representation of the conditional hazard function given the predictors as a ratio between a joint density and a conditional expectation.  The joint density refers only to individuals for whom the time-to-event is observed. The representation can be established with continuous and discrete predictors and requires only mild identification conditions. 



The numerator and denominator in the conditional hazard representation can be estimated using many existing nonparametric approaches. We propose a class of simple, one-pass kernel estimators that can be used to analyze and easily update massive datasets, such as those available via online streams. We show that the hazard function estimator achieves optimal convergence rates. The optimal bandwidth for the pointwise quadratic risk is proportional to the optimal bandwidth of a standard Parzen-Rosenblatt  estimator. A simulation experiment in the case of random right censoring shows that our estimator performs well.

Many issues have yet to be addressed. For example, we have assumed that the regularities of the joint density and the conditional expectation involved in representing the conditional hazard are the same and known. Adaptive methods would be preferable, although they are likely to be more challenging in one-pass, with continuously updated datasets. Another challenge relates to the case where the existence of a simple representation for the conditional hazard of the time-to-event variable is not clear, as is the case with the current status data. 

\appendix
\section{Appendix} 
In the Section \ref{app1} below we provide details on the representation of the conditional hazard function in the case of left truncation and right censoring with a cure rate. In Section \ref{app_proofs} we provide the technical details for justifying the theoretical statements given in Section \ref{sec:chap1-model-theory}.


\subsection{Cure models under left truncation}

Here, we provide details for the formula \eqref{eq:haz_108}. Recall that with a LTRC mechanism, there are latent variables $(T^*,C^*,\X^*, L^*)$ such that~: if $L^*_l > Y^*_l:=T_l^*\wedge C_l^*$, the $l-$th realization in the latent sample is not observed, while if $L^*_l \leq  Y_l^*$ the realization $(Y_l^*,\delta_l^*,\X_l^*, L^*_l)$, with $\delta_l^*=\1\{T_l^*\leq C_l^*\}$,  becomes the $i-$th observation  $(Y_i,\delta_i, \X_i, L_i)$ of the observed variables $(Y,\delta, \X, L)$. The observed sample is the result of an independent  sample $(T_l^*,C_l^*,\X_l^*, L^*_l)$, $l\geq1$, of the latent variables $(T^*,C^*,\X^*,L^*)$. Like in Section \ref{sec:chap1-model-discussion},  assume that 
$T^* \perp L^* \perp  C^* \mid \X^*$
 and 
 $\PP(L^* \leq  T^*\wedge C^*\mid \X=\x) \geq c>0$, 
 for some constant $c$.
Here, we assume   $\PP (T^*=\infty \mid \X=\x) = 1-\phi(\x)>0$. Then, we have 
\begin{equation}\label{eq_tstar} 
S_{T^*\mid \X^*}(t\mid \x) 	=1-\phi(\x)+\phi(\x) S_{0,T^*\mid \X^*}(t\mid \x)  ,
\end{equation}
where $S_{0,T^*\mid \X^*}(t\mid \x)=\PP(T^*>t\mid T^*<\infty , \X^*=\x)$ is the (proper) survival function of the latent susceptibles. Let $[0,\tau(\x)]$ be the bounded support of $S_{0,T^*\mid \X^*}$ and assume $\inf_{\x}\inf_{t\in [0,\tau(\x)]}S_{C^*\mid \X^*}(t\mid \x)>0$. Following the lines of the calculations leading to Equation (6.4.5) in \cite{E2018} with a hidden (latent) duration of interest $T^*\in[0,\infty]$ and the truncation variable $L^*$,  and using the expression of $\lambda_{0,T^*\mid \X^*}$ given in \citep[p. 2327]{PVK2020}, it can be shown that
$$
f_{Y,\delta,\X}(t,1,\x) =  \lambda_{0,T^*\mid \X^*}(t\mid \x) \times F_{L^*}(t\mid \x) p(\x)^{-1} \phi(\x)S_{0,T^*\mid \X^*}(t\mid \x) S_{C^*\mid \X^*}(t\mid \x)f_\X(\x),
$$
where $\lambda_{0,T^*\mid \X^*}$ is the conditional hazard associated to $S_{0,T^*\mid \X^*}$ and, with obvious notation, 
$$
p(\x) = \PP(L^* \leq T^*\wedge C^*\mid \X=\x)= \int_0^\infty S_{T^*\mid \X^*}(t\mid \x) S_{C^*\mid \X^*}(t\mid \x) F_{L^*\mid \X^*}(dt\mid \x) .
$$
On the other hand, by the definition of the observation mechanism, 
\begin{multline}
\mathbb P (L\leq t \leq Y \mid \X=\x)= 	 F_{L^*}(t\mid \x)   p(\x)^{-1}\phi(\x) S_{0,T^*\mid \X^*}(t\mid \x) S_{C^*\mid \X^*}(t-\mid \x)\\  + F_{L^*}(t\mid \x)   p(\x)^{-1} \{1-\phi(\x)\}S_{C^*\mid \X^*}(t-\mid \x). 
\end{multline}
To derive \eqref{eq:haz_108}, it remains to notice that, by the identification assumptions and the definition the observation mechanism, we have
\begin{multline}
	 F_{L^*}(t\mid \x) p(\x)^{-1}S_{C^*\mid \X^*}(t-\mid \x) =p(\x)^{-1}\mathbb P (L^*\leq t \leq C^* \mid \X=\x)\\
  =p(\x)^{-1}\mathbb P (L^*\leq t \leq C^* , L^*\leq Y^*\mid \X=\x)
	\\=\mathbb P (L^*\leq t \leq C^* \mid  L^*\leq Y^*,\X=\x)  = \mathbb P (L\leq t \leq C \mid   \X=\x).
\end{multline}

\label{app1}

\subsection{Technical proofs}
Here, we provide the proofs of the statements made in Section \ref{sec:chap1-model-theory}. Below, the symbol $\sim$ means the left side is bounded above and below by constants times the right side. $\overline C$, $\mathfrak C $, $\mathfrak c, \mathfrak c^\prime $ are constants, possibly different from line to line, but not depending on $n$, $t$, $\z$ or $c$ (the bandwidth factor).  
 

\begin{proof}[Proof of Proposition \ref{stoch1}]
Let $\Omega_n=\sum_{j=1}^n j^{\beta}$ with $\beta>0$ from the definition of the weights $\omega_{n,i}$, such that $\omega_{n,i}=i^{\beta}/\Omega_n$. Since $\beta<1$,  by Lemma \ref{serie_Riemann}, we have $\Omega_n\sim n^{1+\beta}$.	For any $\z\in \mathcal Z$,   $t\in \mathcal T$, and $\x_d$ in the finite set $\mathcal X_d$,  let
\begin{equation}\label{generic_app_b}
	G_i(t,\z,\x_d;c)=   a_{i} D_i(t) \mathbb K  \left(\mathfrak h(i)^{-1}(\Z_i-\z)  \right)  \1 \{  \X_{i,d} = \x_d\}, \quad \text{ with } \; a_{i} = \{\omega_{n,i}\Omega_n\}\mathfrak h(i)^{-p},
\end{equation}	 
where $c$ is the constant from the bandwidths $\mathfrak h(i)$.
Let us recall the notation~: for the case $\widehat{g}_{n}=\widehat f_n$, we denote  $\Z_i = (Y_i,\X_{c,i})$, $\z =(t,\x_c)$ and  $\Z_i,\z \in \mathbb R^{p}$, with $p=d_c+1$; for   the case $\widehat{g}_{n}=\widehat R_n$, we denote  $\Z_i = \X_{c,i}$, $\z =\x_c$ and  $\Z_i,\z \in \mathbb R^{p}$, with $p=d_c$. In view of the  definition of $G_i$, we write $ \widehat{g}_{n}(t,\x;c)  $ instead of $\widehat{g}_{n}(t,\x)$, and we have $ \widehat{g}_{n}(t,\x;c) = \Omega_n^{-1}  \sum_{i=1}^n G_i(t,\z,\x_d;c)$.

By our assumptions, $a_i\sim i^{\alpha p +\beta }$.   
We want to apply Bernstein's inequality  which states that, for any   $X_1,\ldots, X_n$  zero mean independent variables, such that $|X_i|\leq M_n$ for all $1\leq i\leq n$, we have
$$
\forall v\geq 0, \quad \PP \left(  \left|\sum_{i=1}^n X_i\right| \geq v\right)\leq 2 \exp\left( - \frac{v^2/2}{\sigma^2_n+M_nv/3}\right),\quad \text{ where } \quad \sigma_n^2 = \sum_{i=1}^n \EE (X_i^2).
$$
See, for example, \citep[Theorem 2.8.4]{V2018}. For an arbitrary $\x_d\in\mathcal X_d$, let 
\begin{equation}\label{xi_def}
X_i = X_i(t,\z,\x_d;c) = G_i(t,\z,\x_d;c) - \EE[G_i(t,\z,\x_d;c)],
\end{equation}
which implies  $M_n \sim \max\{a_1,\ldots ,a_n\} \sim n^{\alpha p +\beta }$. By Lemma \ref{serie_Riemann} and  since the expectation of $D(t)$ given $\Z = \z$ and $\X_d=\x_d$ is bounded and bounded away from zero, we have 
\begin{equation}\label{sigma_rate}
\sigma^2_n \sim \sum_{i=1}^n i^{2(\alpha p +\beta) }\EE [D(t)\mathbb K ^2 \left(\mathfrak h(i)^{-1}(\Z_i-\z)  \right) \1 \{  \X_{i,d} = \x_d\}] 
 \sim n^{\alpha p +2\beta+1 }.
\end{equation}
Taking $v= \overline C (\log n)^{1/2} n^{(\alpha p +1)/2 + \beta}$ with $\overline C$ some large constant, since $(\alpha p +1)/2< 1$, we have
$$
\frac{v^2/2}{\sigma^2_n+M_nv/3} \sim \frac{\overline C^2 n^{(\alpha p +1) + 2 \beta}\times \log n}{n^{\alpha p +2\beta+1 } + n^{\alpha p +\beta } \times \overline C  (\log n)^{1/2}n^{(\alpha p +1)/2 + \beta} }\sim \overline C \log n.
$$
Noting that $v\Omega_n^{-1} \sim  (\log n)^{1/2} n^{(\alpha p -1)/2 }$, we deduce that, for any $\x_d\in\mathcal X_d$,
\begin{multline}
	\PP \left(\left|\widehat{g}_{n}(t,\x;c)- \EE[\widehat{g}_{n}(t,\x;c)] \right| \geq\overline C (\log n)^{1/2} n^{(\alpha p -1)/2 } \right) 
		\\=	\PP \left( \Omega_n^{-1}  \left|\sum_{i=1}^n X_i(t,\z,\x_d;c)\right| \geq \overline C (\log n)^{1/2} n^{(\alpha p -1)/2 } \right) 
	\leq 2 \exp(-\mathfrak c \times \overline C \log n),
\end{multline}
with $\mathfrak c$ a constant depending only on $\alpha,\beta$, the  $\underline c,\overline c$ giving the bandwidth range in condition \ref{ker1} of Assumption \ref{assump1}, and  the uniform norm of $\mathbb K$.   The pointwise rate from Proposition \ref{stoch1} follows.

To obtain the uniformity with respect to $t$, $\z$, and the constant $c$ in the bandwidth, we consider an equidistant grid of $\mathcal G_1$ points $(\z_{\ell^\prime }, c_{\ell^{\prime\prime}})$ in the set $ \Z\times [\underline c , \overline c]$, with $\mathcal G_1\sim n^{\gamma_1}$, $\gamma_1 \geq 2$. Using the Lipschitz conditions on the kernel $ K(\cdot)$, for any $\z,c$, indices $\ell^\prime, \ell^{\prime\prime}$ exist such that 
$$
|G_i(t,\z,\x_d;c) - G_i(t,\z_{\ell^\prime },\x_d; c_{\ell^{\prime\prime}})|\leq \mathfrak C n^{-1}, \qquad \forall n\geq 1, \forall 1\leq i \leq n,
$$ 
where $\mathfrak C$ is some constant.  Next, for the case where the $G_i$'s correspond to the estimator of $R(t,\x)$,  consider an equidistant ordered grid of $\mathcal G_2$ points $t_{\ell}$ on the compact interval $\mathcal T$, with $\mathcal G_2\sim n^{\gamma_2}$, $\gamma_2 \geq 2$. Then, by Assumption \ref{assump1}-\ref{xia2c}, the monotonicity of  $D(\cdot)$ guarantees that an index $\ell$ exists such that 
\begin{equation}\label{ineqE1}
G_i(t_{\ell+1 },\z_{\ell^\prime },\x_d; c_{\ell^{\prime\prime}}) \leq G_i(t,\z,\x_d;c) \leq G_i(t_{\ell },\z_{\ell^\prime },\x_d; c_{\ell^{\prime\prime}}).
\end{equation}
and, by the Lipschitz condition on the kernel, 
\begin{equation}\label{ineqE2}
\left| \EE[G_i(t_{\ell + 1},\z_{\ell^\prime },\x_d; c_{\ell^{\prime\prime}})] -\EE[G_i(t_{\ell },\z_{\ell^\prime },\x_d; c_{\ell^{\prime\prime}})] \right|\leq \mathfrak C n^{-1}, \qquad \forall n\geq 1, \forall 1\leq i \leq n,
\end{equation}
where $\mathfrak C$ is some constant. Gathering facts, for any $t$, $\z$, and the constant $c$ in the bandwidth, indices  $\ell$, $\ell^\prime$, $\ell^{\prime\prime}$ exist such that 
\begin{multline}
X_i(t_{\ell +1},\z_{\ell^\prime },\x_d; c_{\ell^{\prime\prime}})+ 
		 \left\{\EE[ G_i(t_{\ell +1},\z_{\ell^\prime },\x_d; c_{\ell^{\prime\prime}})] - \EE[ G_i(t,\z,\x_d;c)]\right\}\\
\hspace{-4cm}	\leq  G_i(t,\z,\x_d;c) - \EE[ G_i(t,\z,\x_d;c)]
\\\leq X_i(t_{\ell },\z_{\ell^\prime },\x_d; c_{\ell^{\prime\prime}}) 	+\left\{	\EE[ G_i(t_{\ell },\z_{\ell^\prime },\x_d; c_{\ell^{\prime\prime}})] - \EE[ G_i(t,\z,\x_d;c)]\right\}.
\end{multline}
From this \eqref{ineqE1}, \eqref{ineqE2} and Boole's union bound inequality, we deduce that constants $\mathfrak c, \mathfrak c^\prime, \mathfrak C$ exist such that, 
\begin{multline}
	\PP \left(\sup_{t,\x,c}\left|\widehat{g}_{n}(t,\x;c)- \EE[\widehat{g}_{n}(t,\x;c)] \right| \geq \overline  C (\log n)^{1/2} n^{(\alpha p -1)/2 } \right) 
	\\	\leq 2|\mathcal X_d| \mathcal G_1 \mathcal G_2 \exp(-\mathfrak c \times \overline C \log n)\\ \leq 2|\mathcal X_d|\exp\{\mathfrak c^\prime  (\gamma_1+\gamma_2)  \log n-\mathfrak c  \overline  C \log n\}\rightarrow 0.
\end{multline}
Here, $|\mathcal X_d|$ denotes the cardinal of $\mathcal X_d$. The  convergence to zero is guaranteed by the fact that the set of discrete values $\x_d$ is finite and $\overline C$ can be large. This proves the result. \end{proof} 


\begin{proof}[Proof of Proposition \ref{bias1}]
Let us recall that 
$G(t,\x)= f_{Y,\delta,\X}(t, 1, \x)$ or $G(t,\x)=R(t, \x)$, 
depending on the whether $\widehat g_n$ estimates the numerator or denominator in \eqref{eq_rec_haza}. 
We only consider the case where $\widehat g_n$ estimates the numerator $f_{Y,\delta,\X}(t, 1, \x)$, the other case being very similar. Since for the numerator case the smoothing is taken with respect to $Y$ and $\X_c$, $\Z = (Y,\X_{c})$, $\z =(t,\x_c)$, $p=d_c+1$, we here write $G(\z,\x_d)$ instead of $G(t,\x)$.
Then,   $\forall \x\in\mathcal X_\epsilon$ and $\forall t\in \mathcal T$, $t\geq \epsilon$,  by the definitions, a change of variables and the second order Taylor expansion,   we get 
\begin{multline}\label{T_E}
E_i(t,\x;c):= \EE[ \1\{T\leq C\} \mathbb K_{\mathfrak h(i)} (\Z-\z  )\1 \{  \X_{d} = \x_d\}] 
\\ = \int _{\mathbb R^p} G(\mathbf w,\x_d ) \mathbb K_{\mathfrak h(i)} (\mathbf w -\z  )  d\mathbf w 
=  \int _{\mathbb R^p}  G(\z+ \mathfrak h(i)\mathbf u, \x_d)\mathbb K (\mathbf u  ) d\mathbf u \\= G(\z, \x_d) + \frac{\mathfrak h(i)^2}{2} \mu_2 (\mathbb K) \operatorname{Trace}(\mathcal H_G(\z,\x_d)) \{1+o(1)\},
\end{multline}
provided $i$ is sufficiently large. Here, $\mathcal H_G(\z,\x_d)$ denotes the Hessian matrix of $G$ with respect to $\z$,  which, by our assumptions, is uniformly bounded. 
The calculations in \eqref{T_E} are valid as soon as $i$ is sufficiently large such that $\z+ \mathfrak h(i)\mathbf u\in\mathcal T\times \mathcal X_c$. 
Next, since 
$$
\EE \left[\widehat{g}_{n}(t,\x;c)\right]= \sum_{i=1}^n \omega_{n,i} E_i(t,\x;c),  
$$
and, by Lemma \ref{serie_Riemann}, $\sum_{i=1}^n \omega_{n,i}\mathfrak h(i)^2\sim n^{\beta + 1 -2\alpha}$, the result follows. 

It worth noting that the proof remains valid even when $\epsilon $ decreases to zero at a suitable rate. Such a relaxation matters for estimating the numerator in the conditional hazard function representation, in which case the dimension of $\Z$ is $p\geq 2$. Indeed, let $i_0$ such the calculations in \eqref{T_E} are valid as soon as $i\geq i_0$. Then the bias is of order 
$$
\left[ \sum_{i=1}^{n} \omega_{n,i}\right]^{-1}\left\{ \sum_{i=1}^{i_0} \omega_{n,i}\mathfrak h(i)+ \sum_{i=i_0}^n \omega_{n,i}\mathfrak h(i)^2  \right\} \sim n^{-(\beta + 1) }\left[  i_0^{\beta + 1-\alpha} + \{n^{\beta + 1-2\alpha} - i_0^{\beta + 1-2\alpha}\} \right],
$$ 
and it suffices to guarantee that $ i_0^{\beta + 1-\alpha} / n^{\beta + 1-2\alpha} \rightarrow 0$ in order to preserve the same uniform rate for the bias. Then we can take $ i_0 \sim n^\nu $, with   $0\leq \nu< (\beta+1-2\alpha)/ (\beta+1-\alpha)$, $\epsilon=\epsilon_n \sim n^{-\alpha\nu}$. Note that with $p\geq 2$, our assumptions guarantee that $\beta+1-2\alpha>0$. We can also notice that taking $\beta >0$ allows for larger $\nu$ and thus smaller $\epsilon_n$. \end{proof}


\begin{lemma}\label{serie_Riemann}
	Let  $\rho \geq 0$, $\varrho >0$, $\varrho \neq 1$, 
	and $n>1$. Then
\begin{equation}\label{sumR1}
	\frac{(n+1)^{1-\varrho}-1}{1-\varrho} 
	\leq \sum_{i=1}^n i^{-\varrho} 
	\leq \frac{n^{1-\varrho}-\varrho}{1-\varrho} 
\quad \text{	and }\quad
	\frac{n^{1+\rho}+\rho}{1+\rho} \leq \sum_{i=1}^n i^{\rho } \leq \frac{(n+1)^{1+\rho}-1}{1+\rho}.
\end{equation}
	
\end{lemma}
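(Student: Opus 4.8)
The plan is to derive all four inequalities by the classical device of sandwiching a monotone sum between two integrals of the corresponding power function, isolating the $i=1$ term in exactly those bounds where the naive integral comparison would fail.

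For the pair involving $\sum_{i=1}^n i^{\rho}$, observe that $x\mapsto x^{\rho}$ is nondecreasing on $[0,\infty)$ since $\rho\geq 0$. Hence $i^{\rho}\leq\int_i^{i+1}x^{\rho}\,dx$ for every $i\geq 1$, and summing over $i=1,\dots,n$ gives the upper bound $\sum_{i=1}^n i^{\rho}\leq\int_1^{n+1}x^{\rho}\,dx=\{(n+1)^{1+\rho}-1\}/(1+\rho)$. For the lower bound I would peel off the first term and use $i^{\rho}\geq\int_{i-1}^{i}x^{\rho}\,dx$ for $i\geq 2$, so that $\sum_{i=1}^n i^{\rho}=1+\sum_{i=2}^n i^{\rho}\geq 1+\int_1^{n}x^{\rho}\,dx=1+\{n^{1+\rho}-1\}/(1+\rho)=\{n^{1+\rho}+\rho\}/(1+\rho)$, as claimed.

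For the pair involving $\sum_{i=1}^n i^{-\varrho}$, the function $x\mapsto x^{-\varrho}$ is strictly decreasing on $(0,\infty)$. The lower bound then follows directly from $i^{-\varrho}\geq\int_i^{i+1}x^{-\varrho}\,dx$, summed over $i=1,\dots,n$, which yields $\sum_{i=1}^n i^{-\varrho}\geq\int_1^{n+1}x^{-\varrho}\,dx=\{(n+1)^{1-\varrho}-1\}/(1-\varrho)$; this uses only $\varrho\neq 1$, so that the antiderivative $x^{1-\varrho}/(1-\varrho)$ is available, and the inequality is valid whatever the sign of $1-\varrho$ because it is obtained termwise. For the upper bound I would again isolate $i=1$: for $i\geq 2$ one has $i^{-\varrho}\leq\int_{i-1}^{i}x^{-\varrho}\,dx$, hence $\sum_{i=1}^n i^{-\varrho}=1+\sum_{i=2}^n i^{-\varrho}\leq 1+\int_1^{n}x^{-\varrho}\,dx=1+\{n^{1-\varrho}-1\}/(1-\varrho)=\{n^{1-\varrho}-\varrho\}/(1-\varrho)$.

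There is no genuine obstacle here; the only two points deserving a moment of care are (i) the separate treatment of the $i=1$ summand in the two ``inner'' bounds, needed because $\int_0^1 x^{-\varrho}\,dx$ diverges when $\varrho\geq 1$ while $\int_0^1 x^{\rho}\,dx<1$, and (ii) checking that each integral comparison is derived termwise from monotonicity, so that the orientation of the resulting closed-form inequalities stays correct even when the exponent $1-\varrho$, and hence the evaluated antiderivative, is negative.
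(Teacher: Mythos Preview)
Your proof is correct and follows essentially the same integral-comparison argument as the paper's (commented-out) proof: both bound each term by an integral of the monotone power function over a unit interval, and both handle the ``inner'' bounds by isolating the $i=1$ summand (the paper does this implicitly via the index shift $\sum_{i=2}^{n+1}=\sum_{i=1}^n (i+1)^{\cdot}$, you do it explicitly). Your write-up is in fact slightly more careful in flagging why the orientation of the inequalities survives when $1-\varrho<0$.
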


\label{app_proofs}


\section*{Acknowledgements}

Valentin Patilea gratefully acknowledges discussions with Jean-Marie Rolin. These discussions inspired the approach to survival analysis proposed in this paper.

\bibliographystyle{apalike}
\bibliography{references-chapitre1}

\end{document}